\newtheorem{thm} {Theorem}
\newtheorem{lemma} {Lemma}
\newcommand{\lb}{\left(}
\newcommand{\rb}{\right)}
\newcommand{\cin}{c_{\text{in}}}
\newcommand{\cout}{c_{\text{out}}}
\newcommand{\pin}{p_{\text{in}}}
\newcommand{\pout}{p_{\text{out}}}
\newcommand{\nin}{n_{\text{in}}}
\newcommand{\nout}{n_{\text{out}}}
\newcommand{\Ain}{\mathbf{A}_{\text{in}}}
\newcommand{\Aout}{\mathbf{A}_{\text{out}}}
\newcommand{\C}{\mathbf{C}}
\newcommand{\bCbar}{\mathbf{\overline{C}}}
\newcommand{\onein}{\mathbf{1}_{\text{in}}}
\newcommand{\oneout}{\mathbf{1}_{\text{out}}}
\newcommand{\Din}{\mathbf{D}_{\text{in}}}
\newcommand{\Dout}{\mathbf{D}_{\text{out}}}
\newcommand{\Lin}{\mathbf{L}_{\text{in}}}
\newcommand{\Lout}{\mathbf{L}_{\text{out}}}
\newcommand{\yin}{\mathbf{y}_{\text{in}}}
\newcommand{\yout}{\mathbf{y}_{\text{out}}}
\newcommand{\xin}{\mathbf{x}_{\text{in}}}
\newcommand{\xout}{\mathbf{x}_{\text{out}}}
\newcommand{\bx}{\mathbf{x}}
\newcommand{\by}{\mathbf{y}}
\newcommand{\ra}{\rightarrow}
\begin{document}
\title{Deep Community Detection}

\author{Pin-Yu~Chen and Alfred O. Hero III,~\emph{Fellow},~\emph{IEEE}%<-this % stops a space
\thanks{P.-Y. Chen and A. O. Hero are with the Department of Electrical Engineering and Computer Science, University of Michigan, Ann Arbor, MI 48109, USA. Email : pinyu@umich.edu and hero@umich.edu.}
\thanks{
Some results in this paper were presented in part at IEEE GLOBALSIP 2013 \cite{CPY13GlobalSIP} and IEEE ICASSP 2014 \cite{CPY14ICASSP}.}
\thanks{This work has been partially supported by the Army Research Office (ARO), grant number W911NF-12-1-0443.}
}

\maketitle
%\setstretch{1.9}
\thispagestyle{empty}
\begin{abstract}
A deep community in a graph is a connected component that can only be seen after removal of nodes or edges from the rest of the graph. This paper formulates the problem of detecting deep communities as multi-stage node removal that maximizes a new centrality measure, called the local Fiedler vector centrality (LFVC), at each stage.
The LFVC is associated with the sensitivity of algebraic connectivity to node or edge removals.
We prove that a greedy node/edge removal strategy,  based on successive maximization of LFVC, has bounded performance loss relative to the optimal,  but intractable, combinatorial batch removal strategy. Under a stochastic block model framework, we show that the greedy LFVC strategy can  extract deep communities with probability one as the number of observations becomes large. We apply the greedy LFVC strategy to real-world social network datasets.
Compared with conventional community detection methods  we demonstrate improved ability to identify
important communities and key members in the network.
\end{abstract}

\begin{IEEEkeywords}
 Graph connectivity, local Fiedler vector centrality, node and edge centrality, noisy graphs, removal strategy, spectral graph theory, social networks, submodularity
\end{IEEEkeywords}
%\IEEEpeerreviewmaketitle

\section{Introduction}
\label{sec_Intro}

In social, biological and technological network analysis \cite{Girvan02,Bertrand13Mag,Shuman13}, community detection aims to extract tightly connected subgraphs in the networks.  This problem has attracted a great deal of interest in network science \cite{Fortunato09,Fortunato10}.
Community detection is often cast as graph partitioning. Many graph partitioning methods exist in the literature, including graph cuts \cite{Shi00,Luxburg07}, probabilistic models \cite{Snijders97,Karrer11}, and node/edge pruning strategies based on different criteria \cite{Girvan02,Newman06PNAS,Coscia11,Wen11}.

Many community detection methods are based on detecting nodes or edges with high centrality. Node and edge centralities are quantitative measures that are used to evaluate the level of importance and/or influence of a node or an edge in the network. Centralities can be based on combinatorial measures such as shortest paths or graph diffusion distances between every node pair \cite{Wasserman94,Newman10NetworkIntro}. Centrality measures can also be based on spectral properties of the adjacency and graph Laplacian matrices associated with the graph \cite{Newman10NetworkIntro}.
Many of these measures require global topological information and therefore may not be computationally feasible for very large networks.

Nonparametric community detection methods, such as the the edge betweenness method \cite{Girvan02} and the modularity method \cite{Newman06PNAS}, can be viewed as edge removal strategies that aim to maximize a centrality measure, e.g., the modularity or  betweenness measures.  It is worth noting that these methods
presume that each node in the graph is affiliated with a community.
However, in some community detection applications it often occurs that the graphs contain spurious edges connecting to irrelevant ``noisy'' nodes that are not members of any single community.
In such cases, noisy nodes and edges mask the true communities in the graph. Detection of these masked communities is a difficult problem that we call ``deep community detection''. The formal definition of a deep community is given in Sec. \ref{Sec_deep}. Due to the presence of noisy nodes and spurious edges \cite{Fortunato07,Balakrishnan11},
deep communities elude detection when conventional community detection methods methods are applied.

In this paper, a new partitioning strategy is applied to detect deep communities. This strategy uses a new local measure of centrality that is specifically designed to unmask communities in the presence of spurious edges.
The new partitioning strategy is based on a novel spectral measure \cite{Chung97SpectralGraph} of centrality called local Fiedler vector centrality (LFVC).
LFVC is associated with the sensitivity of algebraic connectivity \cite{Fiedler73} when a subset of nodes or edges are removed from a graph \cite{CPY14ICASSP,CPY14ComMag}.
We show that LFVC relates to a monotonic submodular set function which ensures that greedy node or edge removals based on LFVC are nearly as effective as the optimal combinatorial batch removal strategy.

Our approach utilizes LFVC to iteratively remove nodes in the graph to reveal deep communities.
A removed node that connects multiple deep communities is assigned mixed membership: it is shared among these communities.
Under a ``signal plus noise'' stochastic block model framework \cite{Holland83,Snijders97,Karrer11}, we
use random matrix theory to show that the greedy LVFC strategy can asymptotically identity the deep communities with probability one.
 As compared with the modularity method \cite{Newman06PNAS} and the L1 norm subgraph detection method \cite{MillerICASSP10,Miller10}, we show that the proposed greedy LFVC approach has superior deep community detection performance.
We illustrate the proposed deep community detection method on several real-world social networks.
When our proposed greedy LFVC approach is applied to the network scientist coauthorship dataset \cite{Newman06community}, it reveals deep communities that are not identified by conventional community detection methods.
When applied to social media, the Last.fm online music dataset, we show that
LFVC has the best performance in detecting users with similar interest in artists.

The rest of this paper is organized as follows. Sec. \ref{Sec_spectral} summarizes commonly used centrality measures, the definitions of community, and relevant spectral graph theory.  Sec. \ref{Sec_deep} gives a definition of deep communities. The proposed local Fiedler vector centrality (LFVC) is defined in Sec. \ref{Sec_dist_centrality}.
In Sec. \ref{Sec_block}, we introduce the signal plus noise stochastic block model for a deep community and
establish asymptotic community detection performance of the greedy LFVC strategy.
We apply the greedy LFVC strategy to  real-world social network datasets in Sec. \ref{Sec_performance}. Finally, Sec. \ref{sec_con} concludes the paper.
Throughout the paper we use uppercase letters in boldface (e.g., $\mathbf{A}$) to represent matrices, lowercase letters in boldface (e.g., $\mathbf{a}$) to represent vectors, and uppercase letters in calligraphic face (e.g., $\mathcal{A}$) to represent sets. Subscripts on matrices and vectors indicate elements (e.g., $\mathbf{A}_{ij}$ is the element of $i$-th row and $j$-th column of matrix $\mathbf{A}$, and $a_i$ is the $i$-th element of vector $\mathbf{a}$). $(\cdot)^T$ denotes matrix and vector transpose.

\section{Centralities, Communities, and Spectral Graph Theory}
\label{Sec_spectral}

\subsection{The graph Laplacian matrix and algebraic connectivity}
Consider an undirected and unweighted graph $G=(\mathcal{V},\mathcal{E})$ without self loops or multiple edges. We denote by $\mathcal{V}$ the node set, with $|\mathcal{V}|=n$, and by $\mathcal{E}$ the edge set, with $|\mathcal{E}|=m$. The connectivity structure of $G$ is characterized by an $n$-by-$n$ binary symmetric adjacency matrix $\mathbf{A}$, where $\mathbf{A}_{ij}=1$ if $(i,j) \in \mathcal{E}$, otherwise $\mathbf{A}_{ij}=0$. Let $d_i=\sum_{j=1}^n \mathbf{A}_{ij}$ denote the degree of node $i$. The degree matrix $\mathbf{D}=\textnormal{diag}(\mathbf{d})$ is a diagonal matrix with the degree vector $\mathbf{d}=[d_1,d_2,\ldots d_n]$ on its diagonal.
The graph Laplacian matrix of $G$ is defined as $\mathbf{L}=\mathbf{D}-\mathbf{A}$. Let $\lambda_i(\mathbf{L})$ denote the $i$-th smallest eigenvalue of $\mathbf{L}$ and let $\textbf{1}=[1,\ldots,1]^T$ denote the vector of ones. We have the representation \cite{Merris94,Chung97SpectralGraph}
\begin{align}
\label{eqn_alge_quadratic}
\mathbf{x}^T \mathbf{L} \mathbf{x}=\frac{1}{2}\sum_{i \in \mathcal{V}} \sum_{j \in \mathcal{V}} \mathbf{A}_{ij} (x_i-x_j)^2,
\end{align}
which is nonnegative,
and $\mathbf{L} \textbf{1}=(\mathbf{D}-\mathbf{A})\textbf{1}=\textbf{0}$, the vector of all zeros. Therefore
$\lambda_1(\mathbf{L})=0$ and $\mathbf{L}$ is a positive semidefinite (PSD) matrix.

The algebraic connectivity of $G$ is defined as the second smallest eigenvalue of $\mathbf{L}$, i.e., $\lambda_2(\mathbf{L})$. $G$ is connected if and only if $\lambda_2(\mathbf{L})>0$. Moreover, it is a well-known property \cite{Fiedler73} that for any non-complete graph,
\begin{align}
\label{eqn_connectivity}
\lambda_2(\mathbf{L}) \leq \textnormal{node connectivity} \leq \textnormal{edge connectivity},
\end{align}
where node/edge connectivity is the least number of node/edge removals that disconnects the graph. (\ref{eqn_connectivity}) is the main motivation for our proposed node/edge pruning approach. A graph with larger algebraic connectivity is more resilient to node and edge removals. In addition, let $d_{\min}$ be the minimum degree of $G$, it is also well-known \cite{Abreu07,Chung97SpectralGraph} that $\lambda_2(\mathbf{L}) \leq 1$ if and only if $d_{\min}=1$. That is, a graph with a leaf node (i.e., a node with a single edge) cannot have algebraic connectivity larger than 1. For any connected graph, we can represent the algebraic connectivity as
\begin{align}
\label{eqn_alge}
\lambda_2(\mathbf{L})=\min_{\|\mathbf{x}\|_2=1,~\mathbf{x} \perp \textbf{1}}{\mathbf{x}^T \mathbf{L} \mathbf{x}}
\end{align}
by the Courant-Fischer theorem \cite{HornMatrixAnalysis} and the fact that the constant vector is the eigenvector associated with $\lambda_1(\mathbf{L})=0$.

\subsection{Some examples of centralities}
\label{subsec_centrality}
Centrality measures can be classified into two categories, \emph{global} and \emph{local} measures. Global centrality measures require complete topological information for their computation, whereas local centrality measures only require local topological information from neighboring nodes.
Some examples of node centralities are:
\begin{itemize}
  \item \textbf{Betweenness} \cite{Freeman77}: betweenness measures the fraction of shortest paths passing through a node relative to the total number of shortest paths in the network. Specifically, betweenness
      is a global measure defined as $\text{betweenness}(i)=\sum_{k \neq i} \sum_{j \neq i, j > k} \frac{\phi_{kj}(i)}{\phi_{kj}}$, where
$\phi_{kj}$ is the total number of shortest paths from $k$ to $j$ and $\phi_{kj}(i)$ is the number of such shortest paths passing through $i$. A similar notion is used to define the edge betweenness centrality \cite{Girvan02}.

  \item \textbf{Closeness} \cite{Sabidussi66Closeness}: closeness is a global measure of geodesic distance of a node to all other nodes. A node is said to have high closeness if the sum of its shortest path distances to other nodes is small.
  Let $\rho(i,j)$ denote the shortest path distance between node $i$ and node $j$ in a connected graph. Then we define $\text{closeness}(i)=1/{\sum_{j \in \mathcal{V}, j\neq i} \rho(i,j)}$.

 \item \textbf{Eigenvector centrality} (eigen centrality) \cite{Newman10NetworkIntro}:  eigenvector centrality is the $i$-th entry of the eigenvector associated with the largest eigenvalue of the adjacency matrix $\mathbf{A}$. It
  is defined as \text{eigen}$(i)=\lambda_{\max}^{-1} \sum_{j\in \mathcal V} \mathbf A_{ij}\xi_{j}$, where $\lambda_{\max}$ is the largest eigenvalue of $\mathbf{A}$ and $\mathbf{\xi}$ is the eigenvector associated with $\lambda_{\max}$.
    It is a global measure since eigenvalue decomposition on $\mathbf{A}$ requires global knowledge of the graph topology.
  \item \textbf{Degree} ($d_i$): degree is the simplest local node centrality measure which accounts for the number of neighboring nodes.
  \item \textbf{Ego centrality} \cite{Everett05Ego}: consider the $(d_i+1)$-by-$(d_i+1)$ local adjacency matrix of node $i$, denoted by $\mathbf{A}(i)$, and let $\mathbf{I}$ be an identity matrix.
  Ego centrality can be viewed as a local version of betweenness that computes the shortest paths between its neighboring nodes.
      Since $[\mathbf{A}^2(i)]_{kj}$ is the number of two-hop walks between $k$ and $j$, and $\left[\mathbf{A}^2(i) \circ \left(\mathbf{I}-\mathbf{A}(i)\right) \right]_{kj}$ is the total number of two-hop shortest paths between $k$ and $j$ for all $k \neq j$, where $\circ$ denotes entrywise matrix product. Ego centrality is defined as
      $\text{ego}(i)=\sum_{k} \sum_{j>k} 1/{\left[\mathbf{A}^2(i) \circ \left(\mathbf{I}-\mathbf{A}(i)\right) \right]_{kj}}$.
\end{itemize}
These centrality measures are used for comparison with the proposed centrality measure (LFVC) for deep community detection in Sec. \ref{Sec_performance}.

\subsection{Community and modularity}
\label{subsec_community}
Many possible definitions of communities exist in the literature \cite{Wasserman94,Fortunato10,Coscia11}. One widely adopted definition is based on the relations between the number of internal and external connections of a subgraph $S \subset G$  \cite{Radicchi02}. For a subgraph $S \subset G$ with node set $\mathcal{V}_S$, let $d_i^{\textnormal{int}}(S)=\sum_{j \in S} \mathbf{A}_{ij}$ denote the number of internal edges of node $i$ in $S$ and $d_i^{\textnormal{ext}}(S)=\sum_{j \in \mathcal{V}/\mathcal{V}_S} \mathbf{A}_{ij}$ denote the number of external edges of node $i$ outside $S$. $S$ is said to be a community in the strong sense if $d_i^{\textnormal{int}}(S)>d_i^{\textnormal{ext}}(S)$ for all $i \in S$, and $S$ is said to be a community in the weak sense if
$\sum_{i\in S} d_i^{\textnormal{int}}(S)>\sum_{i \in S}d_i^{\textnormal{ext}}(S)$.

Newman \cite{Newman06PNAS} defines a community by comparing the internal and external connections of a subgraph with that of a random graph having the same degree pattern {(i.e., a random graph where each node has exactly the same degree as the original graph), and he proposes a quantity called modularity to construct a graph partitioning of $G$ into communities.
First consider partitioning a graph into two communities. Recalling that $m=|\mathcal{E}|$ is the number of edges in the graph, define $\mathbf{B}_{ij}=\mathbf{A}_{ij}-\frac{d_id_j}{2m}$. $\mathbf{B}_{ij}$ can be interpreted as the number of excessive edges between $i$ and $j$ since $\mathbf{B}_{ij}$ is the difference of actual edges minus the expected edges of the degree-equivalent random graph. Let $\mathbf{s}$ be the membership vector such that $s_i=1$ if $i$ is in community $1$ and $s_i=-1$ if $i$ is in community $2$. Modularity $Q$ is proportional to the total number of excess edges in each community, i.e.,
\begin{align}
\label{eqn_modularity}
Q=\frac{1}{4m} \sum_{i \in \mathcal{V}} \sum_{j \in \mathcal{V}} \mathbf{B}_{ij} (s_i s_j+1) = \frac{1}{4m} \mathbf{s}^T \mathbf{B} \mathbf{s}
\end{align}
since $\sum_{i \in \mathcal{V}} \sum_{j \in \mathcal{V}} \mathbf{B}_{ij}=0$. Maximizing this quadratic form yields a partition of $G$ into two communities \cite{Newman06community}.
The associated membership vector $\mathbf{s}$ can be obtained by computing the largest eigenvector $\mathbf {b}_{\max}$ of $\mathbf{B}$ and extracting its polarity, i.e., $\mathbf{s}={\mathrm{sign}}(\mathbf {b}_{\max})$ \cite{Newman06community}.

To divide a network into more than two communities, Newman proposes a recursive partitioning approach. It is also verified in \cite{Newman13} that there is no performance difference between the modularity method \cite{Newman06PNAS}, the statistical inference method \cite{Snijders97,Airoldi08}, and the normalized cut method \cite{Luxburg07}.
However, the modularity method may fail to detect small communities even when community structures are apparent \cite{Fortunato07,Good10,Nadakuditi12Detecability}.

In \cite{Wen11}, a node removal strategy based on targeting high degree nodes is proposed to improve the performance of the modularity method. The authors of \cite{Wen11} argue that high-degree nodes incur more noisy connections than low-degree nodes, and it is experimentally demonstrated that removing high-degree nodes can better reveal the community structure.

\subsection{The Fiedler vector}
The Fiedler vector of a graph is the eigenvector associated with the second smallest eigenvalue $\lambda_2(\mathbf{L})$ of the graph Laplacian matrix $\mathbf{L}$ \cite{Fiedler73}. The Fiedler vector has been widely used in graph partitioning, image segmentation and data clustering \cite{Pothen90,Spielman2007,Shi00,Luxburg07,Schaeffer07}. Analogously to modularity partitioning, the Fiedler vector performs community detection by separating the nodes in the graph according to the signs of the corresponding Fiedler vector elements. Similarly, hierarchical community structure can be detected by recursive partitioning with the Fiedler vector.

In this paper, we use the Fiedler vector to define a new centrality measure.
One advantage of using the Fiedler vector over other global centrality measures is that
it can be computed in a distributed manner via local information exchange over the graph \cite{Bertrand13}.

\section{Deep community}
\label{Sec_deep}
A deep community is defined in terms of an additive signal (community) plus noise model. Let $\mathbf{A}_1,\ldots,\mathbf{A}_g$ denote the $n \times n$ mutually orthogonal binary adjacency matrices associated with $g$ non-singleton connected components in a noiseless graph $G_0$ over $n$ nodes. Assume the nodes have been permuted so that $\mathbf{A}_1,\ldots,\mathbf{A}_g$ are block diagonal with non-overlapping block indices $\mathcal{I}_1,\ldots,\mathcal{I}_g$. The observed graph $G$ is a noise corrupted version of $G_0$ where random edges have been inserted between the connected components of $G_0$. More specifically, let $\mathbf{A}_{\text{nse}}$ be a random adjacency matrix with the property that $\mathbf{A}_{\text{nse}}(i,j)=0$, $i,j \in \mathcal{I}_k$, for $k=1,\ldots,g$ and where the rest of the elements of $\mathbf{A}_{\text{nse}}$ are Bernoulli i.i.d random variables. Then the adjacency matrix $\mathbf{A}$ of $G$ satisfies the signal plus noise model
\begin{align}
\label{eqn_Adjacency_Boolean}
\mathbf{\mathbf{A}}=\sum_{k=1}^g  \mathbf{A}_k  + \mathbf{A}_{\text{nse}}.
\end{align}

The deep community detection problem is to recover connected components $\mathbf{A}_1,\ldots,\mathbf{A}_g$ from the noise corrupted observations $\mathbf{A}$. The $\mathbf{A}_k$'s are called deep communities in the sense that they are embedded
in a graph with random interconnections between connected components. The performance analysis of deep community detection on networks generated by a specified stochastic block model \cite{Holland83}
is discussed in Sec. \ref{Sec_block}. An illustrative visual example of deep community detection is shown in the longer arXiv version of this paper\footnote{available at http://arxiv.org/abs/1407.6071}.
Deep community detection is equivalent to the planted clique problem \cite{Alon98} in the special case that $g=1$ and the non-zero block of $\mathbf A_1$ corresponds to a complete graph, i.e., all off-diagonal elements of this block are equal to one.   
Models similar to (\ref{eqn_Adjacency_Boolean}) have also been used for hypothesis testing on the existence of dense subgraphs embedded in random graphs \cite{MillerICASSP10,Miller10}. The null hypothesis is the noise only model (i.e., $\mathbf{A}_k=\mathbf{0}~\forall k$). The alternative hypothesis is the signal plus noise model (\ref{eqn_Adjacency_Boolean}) with $\mathbf{A}_k \neq \mathbf{0}$.  The authors in \cite{MillerICASSP10,Miller10} propose to use the L1 norms of the eigenvectors of the modularity matrix $\mathbf{B}$ as test statistics. This statistic is compared with our proposed local Fiedler vector deep community detection method in Sec. \ref{Sec_block}.

We propose an iterative denoising algorithm for recovering deep communities that is based on either node or edge removals. The proposed algorithm uses a spectral centrality measure, defined in Sec. \ref{Sec_dist_centrality}, to determine the nodes/edges to be pruned from the observed graph with adjacency matrix $\mathbf{A}$. 

%We formulate deep community detection as an iterative node/edge removal procedure on the observed adjacency matrix $\mathbf{A}$ where centrality measures are used to determine the nodes/edges to be pruned.
%The proposed spectral centrality measure (LFVC) in Sec. \ref{Sec_dist_centrality} is proposed for this purpose. For node removal based deep community detection,
%after some subset of nodes are removed from the graph, a discovered deep community is defined as the remaining non-singleton connected subgraph
% augmented by the removed nodes that were previously adjacent to the subgraph before their removal\footnote{Using terminology from graph theory, a deep community is defined as the subgraph induced by the nodes in a remaining non-singleton connected subgraph and the adjacent removed nodes.}. For edge removal based deep community detection, a discovered deep community is defined as the remaining non-singleton connected subgraph.
%
%More specifically, for the node removal procedure consider a non-singleton connected subgraph $S$ with node set $\mathcal{V}_S$ after removal of a set of nodes $\mathcal{R}$ from $G$. The discovered deep community is defined as the set
%$\mathcal{V}_S \cup \left\{i \in \mathcal{R}: \mathbf{A}_{ij}=1 \textnormal{~for~some}~j \in S\right\}$.

Let $\widetilde{\mathbf{L}}$ be the resulting $n \times n$ graph Laplacian matrix after removing a subset of nodes or edges from the graph.
The following theorem provides an upper bound on the number of deep communities in the remaining graph $\widetilde{G}$.
\begin{thm}
\label{Thm_number_community}
For any node removal set $\mathcal{R}$ of $G$ with $|\mathcal{R}|=q$, let $r$ be the rank of the resulting graph Laplacian matrix $\widetilde{\mathbf{L}}$ and let $\|\widetilde{\mathbf{L}}\|_\ast=\sum_i {\lambda_i(\widetilde{\mathbf{L}})}$ denote its nuclear norm.
The number $\epsilon$ of remaining non-singleton connected components in $\widetilde{G}$ has the upper bound
\begin{align}
\label{eqn_UB_of_deep_community}
\epsilon &\leq n-q-r  \nonumber \\
&\leq n-q-\frac{\|\widetilde{\mathbf{L}}\|_\ast}{\lambda_n(\widetilde{\mathbf{L}})} \nonumber \\
&=n-q-\frac{2\widetilde{m}}{\lambda_n(\widetilde{\mathbf{L}})},
\end{align}
where $\widetilde{m}$ is the number of edges in $\widetilde{G}$.
The first inequality in (\ref{eqn_UB_of_deep_community}) becomes an equality if all connected components in $\widetilde{G}$ are non-singletons.
The second inequality in (\ref{eqn_UB_of_deep_community}) becomes an equality if all non-singleton connected components are complete subgraphs of the same size. Similarly, for any edge removal set of $G$, let $r$ be the rank of the resulting graph Laplacian matrix $\widetilde{\mathbf{L}}$. The number $\epsilon$ of remaining non-singleton connected components in $\widetilde{G}$ has the upper bound
$\epsilon \leq n-r
\leq n-\frac{\|\widetilde{\mathbf{L}}\|_\ast}{\lambda_n(\widetilde{\mathbf{L}})}
=n-\frac{2\widetilde{m}}{\lambda_n(\widetilde{\mathbf{L}})}.$
\end{thm}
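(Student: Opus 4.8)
\emph{Proof proposal.} The plan is to combine two elementary facts from spectral graph theory: (i) the rank of a graph Laplacian equals the number of vertices minus the number of connected components, and (ii) the nuclear norm of a PSD matrix equals its trace, which for a Laplacian is twice the number of edges. Fact (i) follows directly from (\ref{eqn_alge_quadratic}): $\mathbf{x}^T\widetilde{\mathbf{L}}\mathbf{x}=\tfrac12\sum_{i,j}\widetilde{\mathbf{A}}_{ij}(x_i-x_j)^2=0$ if and only if $\mathbf{x}$ is constant on each connected component of $\widetilde{G}$, so $\dim\ker\widetilde{\mathbf{L}}$ equals the number $c$ of connected components. Whether the $q$ removed nodes are deleted or kept as isolated vertices, they contribute only zero rows/columns and thus extra zero eigenvalues, affecting neither $r$ nor $\lambda_n(\widetilde{\mathbf{L}})$; hence $r=(n-q)-c$, where $c$ counts the connected components on the surviving node set $\mathcal{V}\setminus\mathcal{R}$. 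Since the $\epsilon$ non-singleton components are among the $c$ total ones, $\epsilon\le c=n-q-r$, with equality exactly when no surviving component is a singleton, which is the first stated equality condition.

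Next I would prove the second inequality. Because $\widetilde{\mathbf{L}}$ is PSD of rank $r$, it has exactly $r$ positive eigenvalues, each bounded above by $\lambda_n(\widetilde{\mathbf{L}})$, so
\[
\|\widetilde{\mathbf{L}}\|_\ast=\sum_i\lambda_i(\widetilde{\mathbf{L}})=\sum_{i:\,\lambda_i(\widetilde{\mathbf{L}})>0}\lambda_i(\widetilde{\mathbf{L}})\le r\,\lambda_n(\widetilde{\mathbf{L}}),
\]
which gives $r\ge\|\widetilde{\mathbf{L}}\|_\ast/\lambda_n(\widetilde{\mathbf{L}})$; chaining this with the previous display yields the middle line of (\ref{eqn_UB_of_deep_community}). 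The final identity $\|\widetilde{\mathbf{L}}\|_\ast=\mathrm{tr}(\widetilde{\mathbf{L}})=\sum_i\widetilde{d}_i=2\widetilde{m}$ is immediate from PSD-ness together with $\widetilde{\mathbf{L}}=\widetilde{\mathbf{D}}-\widetilde{\mathbf{A}}$.

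For the equality conditions of the second inequality, note that $\|\widetilde{\mathbf{L}}\|_\ast=r\,\lambda_n(\widetilde{\mathbf{L}})$ forces all $r$ positive eigenvalues to equal $\lambda_n(\widetilde{\mathbf{L}})$. Since the Laplacian spectrum of a disjoint union is the multiset union of the component spectra, every non-singleton component must have a single distinct positive Laplacian eigenvalue $\mu$, and all components must share the same $\mu$. A connected graph on $k$ vertices whose only positive Laplacian eigenvalue is $\mu$ has the same eigenvectors as $\mu(\mathbf{I}-\tfrac1k\mathbf{J})$ (the constant vector at eigenvalue $0$, its orthogonal complement at $\mu$), with $\mathbf{J}$ the all-ones matrix, so $\widetilde{\mathbf{L}}=\mu(\mathbf{I}-\tfrac1k\mathbf{J})$ on that component; matching off-diagonal entries, which lie in $\{0,-1\}$, forces $\mu/k=1$ and every off-diagonal entry to be $-1$, i.e.\ the component is the complete graph $K_k$ with $\mu=k$. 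Equality of $\mu$ across components then forces equal sizes, which is the second stated equality condition. The edge-removal bound $\epsilon\le n-r\le n-2\widetilde{m}/\lambda_n(\widetilde{\mathbf{L}})$ follows from the identical argument with $q=0$, since removing edges leaves all $n$ nodes and $r=n-c$. The only step that requires genuine care is this last characterization --- showing a single positive Laplacian eigenvalue forces completeness of the component --- while the remainder is bookkeeping with standard identities.
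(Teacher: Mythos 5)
Your proof is correct and the main chain of inequalities follows the paper's route exactly: rank--nullity of the Laplacian gives $\epsilon\le c=n-q-r$, the bound $\|\widetilde{\mathbf{L}}\|_\ast\le r\,\lambda_n(\widetilde{\mathbf{L}})$ gives the middle line, and $\mathrm{trace}(\widetilde{\mathbf{L}})=2\widetilde{m}$ gives the last. Where you genuinely diverge is the equality condition for the second inequality. The paper only verifies sufficiency by a direct computation: for $g$ disjoint copies of $K_{n^\prime}$ one has $\lambda_n=n^\prime$ and $\|\widetilde{\mathbf{L}}\|_\ast=g\,n^\prime(n^\prime-1)$, so the bound evaluates to $g$. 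You instead prove the converse characterization: tightness forces every positive eigenvalue to equal $\lambda_n$, which forces each non-singleton component to have Laplacian $\mu(\mathbf{I}-\tfrac1k\mathbf{J})$, hence to be $K_k$ with $\mu=k$, and shared $\mu$ forces equal sizes. This is a strictly stronger statement (an ``only if''), and it is a nice addition; just note that the theorem as stated asserts the ``if'' direction, which your argument does not literally state but which follows immediately from the same spectral facts (each $K_k$ contributes $k-1$ eigenvalues all equal to $k$, so $\|\widetilde{\mathbf{L}}\|_\ast=r\lambda_n$). You also supply a self-contained proof that $\dim\ker\widetilde{\mathbf{L}}$ counts connected components (including the $q$ zero rows of the retained $n\times n$ Laplacian), where the paper simply cites Chung; your bookkeeping $c=n-q-r$ on the surviving node set is correct.
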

\begin{proof}
The proof can be found in Appendix \ref{appen_num_community_proof}.
\end{proof}

The upper bound in Theorem \ref{Thm_number_community} can be further relaxed by applying the inequality $\lambda_n(\widetilde{\mathbf{L}}) \leq 2 \widetilde{d}_{\max}$ \cite{Chung97SpectralGraph}, where $\widetilde{d}_{\max}$ is the maximum degree of $\widetilde{G}$. Other bounds on $\lambda_n(\widetilde{\mathbf{L}})$ can be found in \cite{Guo05}.

The next theorem shows that the largest non-singleton connected component size can be represented as a matrix one norm of a matrix whose column vectors are orthogonal and sparsest among all binary vectors that form a basis of the null space of $\widetilde{\mathbf{L}}$.

\begin{thm}
\label{Thm_largest_component_size}
Define the sparsity of a vector to be the number of zero entries in the vector.
Let $\textnormal{null}(\widetilde{\mathbf{L}})$ denote the null space of $\widetilde{\mathbf{L}}$ and let $\mathbf{X}$ denote the matrix whose columns are orthogonal and they form the sparsest basis of \textnormal{null}$(\widetilde{\mathbf{L}})$ among binary vectors.
Let $\psi(\widetilde{G})$ be the largest non-singleton connected component size of $\widetilde{G}$.
Then $\psi(\widetilde{G})=\| \mathbf{X}\|_1=\max_i \|\mathbf{x}_i\|_1$, where $\mathbf{x}_i$ is the $i$-th column vector of binary matrix $\mathbf{X}$.
\end{thm}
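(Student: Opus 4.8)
The plan is to identify the sparsest orthogonal binary basis of $\textnormal{null}(\widetilde{\mathbf{L}})$ explicitly as the collection of connected-component indicator vectors of $\widetilde{G}$, and then read off $\psi(\widetilde{G})$ as the largest of their $\ell_1$ norms. First I would characterize the null space: writing $\widetilde{\mathbf{A}}$ for the adjacency matrix of $\widetilde{G}$, the identity (\ref{eqn_alge_quadratic}) applied to $\widetilde{\mathbf{L}}$ gives $\mathbf{x}^T \widetilde{\mathbf{L}} \mathbf{x} = \frac{1}{2}\sum_{i,j} \widetilde{\mathbf{A}}_{ij}(x_i - x_j)^2 \ge 0$, so since $\widetilde{\mathbf{L}}$ is PSD, $\mathbf{x} \in \textnormal{null}(\widetilde{\mathbf{L}})$ iff $x_i = x_j$ whenever $(i,j)$ is an edge of $\widetilde{G}$, i.e. iff $\mathbf{x}$ is constant on each connected component. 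Hence, if $C_1,\ldots,C_k$ are the connected components of $\widetilde{G}$ (singletons included), then $\textnormal{null}(\widetilde{\mathbf{L}}) = \textnormal{span}\{\mathbf{1}_{C_1},\ldots,\mathbf{1}_{C_k}\}$ and $\dim \textnormal{null}(\widetilde{\mathbf{L}}) = k$, where $\mathbf{1}_{C}$ is the $0$--$1$ indicator of $C$. Moreover, a \emph{binary} vector that is constant on each $C_j$ takes the value $0$ or $1$ on each $C_j$, so it equals $\mathbf{1}_{\bigcup_{j \in S} C_j}$ for some $S \subseteq \{1,\ldots,k\}$; thus every binary basis of $\textnormal{null}(\widetilde{\mathbf{L}})$ consists of $k$ such ``union indicators'' $\mathbf{1}_{\bigcup_{j \in S_i} C_j}$, $i=1,\ldots,k$.

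Next I would pin down $\mathbf{X}$. Two binary vectors are orthogonal iff their supports are disjoint, so the orthogonality requirement forces $S_1,\ldots,S_k$ to be pairwise disjoint nonempty subsets of the $k$-element set $\{1,\ldots,k\}$; a counting argument then forces each $S_i$ to be a singleton, so the columns of $\mathbf{X}$ are, up to reordering, exactly $\mathbf{1}_{C_1},\ldots,\mathbf{1}_{C_k}$. (One checks separately that this choice is also sparsest among \emph{all} binary bases, orthogonal or not: if $j^\ast$ indexes a largest component, then any binary basis must contain a union indicator whose support includes $C_{j^\ast}$, since otherwise the corresponding column of the change-of-basis matrix vanishes and it is singular; hence the maximal column support of any binary basis is at least $|C_{j^\ast}|$, which the component indicators attain.) Consequently $\|\mathbf{X}\|_1$, the induced matrix $1$-norm, is the largest absolute column sum, which equals $\max_i \|\mathbf{x}_i\|_1 = \max_{1 \le j \le k} |C_j|$ because the entries are nonnegative.

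Finally, because $\widetilde{G}$ contains at least one non-singleton component, the maximum of $|C_j|$ over all components coincides with its maximum over non-singleton components (singletons contribute size $1$), so $\max_i \|\mathbf{x}_i\|_1 = \psi(\widetilde{G})$, which completes the argument. I expect the only delicate point to be the combinatorial step showing that orthogonality together with the basis property forces the columns of $\mathbf{X}$ to be the individual component indicators (and the parallel ``sparsest among all binary bases'' remark); the null-space characterization and the norm bookkeeping are routine.
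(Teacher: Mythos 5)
Your proof is correct and follows essentially the same route as the paper: both identify the columns of $\mathbf{X}$ as the connected-component indicator vectors and read off component sizes from the column sums of this binary matrix. Your treatment of why orthogonality and sparsity force the columns to be individual component indicators (via union indicators with disjoint supports) is actually more careful than the paper's brief justification, but the underlying argument is the same.
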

\begin{proof}
The proof can be found in Appendix \ref{appen_largest_component_proof}.
\end{proof}

Theorems \ref{Thm_number_community} and \ref{Thm_largest_component_size} are key results that motivate and theoretically justify the proposed local Fiedler vector centrality measure introduced below. Theorem \ref{Thm_number_community} establishes that the number of deep communities is closely related to the number of edge/node removals that are required to reveal them. Theorem \ref{Thm_largest_component_size} establishes that $L_1$ norm of the sparsest basis for the null space of the graph Laplacian matrix can be used to estimate the size of the largest deep community in the network.

\section{The Proposed Node and Edge Centrality: Local Fiedler Vector Centrality (LFVC)}
\label{Sec_dist_centrality}
The proposed deep community detection algorithm (Algorithm \ref{algo_LFVC_detection}) is based on removal of nodes or edges according to how the removals affect a measure of algebraic connectivity. This measure, called the local Fiedler vector centrality (LFVC), is computed from the graph Laplacian matrix.  
In particular, the LFVC is motivated by the fact  that node/edge removals result in low rank perturbations to the graph Laplacian matrix when $n \gg d_{\max}$, where $d_{\max}$ is the maximum degree. The node and edge LFVC are then defined to correspond to an upper bound on algebraic connectivity.

\subsection{Edge-LFVC}
\label{subsec_rankone_perturbation}
Considering the graph $\widetilde{G}(i,j)=(\mathcal{V},\mathcal{E} \cup (i,j))$ by adding an edge $(i,j)\notin \mathcal{E}$ to $G$,
we have $\widetilde{\mathbf{L}}=\mathbf{L}+\Delta \mathbf{L}$ and $\Delta \mathbf{L}=\Delta \mathbf{D}-\Delta \mathbf{A}$, where
$\Delta \mathbf{D}$ and $\Delta \mathbf{A}$ are the augmented degree and adjacency matrices, respectively.
Denote the resulting graph Laplacian matrix by $\widetilde{\mathbf{L}}(i,j)$.
Let $\mathbf{e}_i$ be a zero vector except that its $i$-th element is equal to $1$. Then
\begin{align}
&\Delta \mathbf{D}=\textnormal{diag}(\mathbf{e}_i)+\textnormal{diag}(\mathbf{e}_j)=\mathbf{e}_i \mathbf{e}_i^T + \mathbf{e}_j \mathbf{e}_j^T; \\
&\Delta \mathbf{A}=\mathbf{e}_i \mathbf{e}_j^T + \mathbf{e}_j \mathbf{e}_i^T,
\end{align}
and therefore
\begin{align}
\widetilde{\mathbf{L}}(i,j)=\mathbf{L}+(\mathbf{e}_i-\mathbf{e}_j)(\mathbf{e}_i-\mathbf{e}_j)^T.
\end{align}
Thus, the resulting graph Laplacian matrix $\widetilde{\mathbf{L}}(i,j)$ after adding an edge $(i,j)$ to $G$ is the original graph Laplacian matrix $\mathbf{L}$ perturbed by a rank one matrix $(\mathbf{e}_i-\mathbf{e}_j)(\mathbf{e}_i-\mathbf{e}_j)^T$.
Similarly, when an edge  $(i,j) \in \mathcal{E}$ is removed from $G$, we have $\widetilde{\mathbf{L}}(i,j)=\mathbf{L}-(\mathbf{e}_i-\mathbf{e}_j)(\mathbf{e}_i-\mathbf{e}_j)^T$.

Consider removing an edge $(i,j) \in \mathcal{E}$ from $G$ resulting in $\widetilde{\mathbf{L}}(i,j)$ above.
Let $\mathbf{y}$ denote the Fiedler vector of $\mathbf{L}$, computing $\mathbf{y}^T \widetilde{\mathbf{L}}(i,j) \mathbf{y}$ gives an upper bound on $\lambda_2(\widetilde{\mathbf{L}}(i,j))$ as
\begin{align}
\label{eqn_UB_alge_link}
\lambda_2(\widetilde{\mathbf{L}}(i,j)) &\leq \mathbf{y}^T \widetilde{\mathbf{L}}(i,j) \mathbf{y} \nonumber \\
&= \mathbf{y}^T (\mathbf{L}-(\mathbf{e}_i-\mathbf{e}_j)(\mathbf{e}_i-\mathbf{e}_j)^T) \mathbf{y} \nonumber \\
& =\lambda_2(\mathbf{L})-(y_i-y_j)^2
\end{align}
following the definition of $\lambda_2(\mathbf{L})=\min_{\|\mathbf{x}\|_2=1, \mathbf{x} \perp \textbf{1}} \mathbf{x}^T \mathbf{L} \mathbf{x}$ in (\ref{eqn_alge}).
It is worth mentioning that for any connected graph $G$ there exists at least one edge removal such that the inequality $\lambda_2(\widetilde{\mathbf{L}}(i,j)) < \lambda_2(\mathbf{L}) $ holds, otherwise $y_i=y_j$ for all
$i,j \in \mathcal{V}$ and this violates the constraints that $\|\mathbf{y}\|_2=1$ and $\sum_{i=1}^n y_i=0$. Consequently, there exists at least one edge removal that leads to a decrease in algebraic connectivity.

Similarly, when we remove a subset of edges $\mathcal{E}_\mathcal{R} \subset \mathcal{E}$ from $G$, where $|\mathcal{E}_\mathcal{R}|=h$. We obtain an upper bound
\begin{align}
\label{eqn_UB_alge_multiple_link}
\lambda_2(\widetilde{\mathbf{L}}(\mathcal{E}_\mathcal{R})) \leq \lambda_2(\mathbf{L})-\sum_{(i,j) \in \mathcal{E}_\mathcal{R}} (y_i-y_j)^2.
\end{align}
Correspondingly, we define the local Fiedler vector edge centrality as
\begin{align}
\label{edge_LFVC}
\textnormal{edge-LFVC}(i,j)=(y_i-y_j)^2.
\end{align}
Edge-LFVC is a measure of centrality as it associates the sensitivity of algebraic connectivity to edge removal as described in (\ref{eqn_UB_alge_multiple_link}).
The top $h$ edge removals which lead to the largest decrease on the right hand side of (\ref{eqn_UB_alge_multiple_link}) are the $h$ edges with the highest edge-LFVC.

\subsection{Node-LFVC}
When a node $i \in \mathcal{V}$ is removed from $G$, all the edges attached to $i$ will also be removed from $G$. Similar to (\ref{eqn_UB_alge_link}), the resulting graph Laplacian matrix $\widetilde{\mathbf{L}}(i)$ can be regarded as a rank $d_i$ matrix perturbation of $\mathbf{L}$.
Since $\mathbf{L}-\widetilde{\mathbf{L}}(i)=\sum_{j\in \mathcal{N}_i} (\mathbf{e}_i-\mathbf{e}_j)(\mathbf{e}_i-\mathbf{e}_j)^T$, where $\mathcal{N}_i$ is the set of neighboring nodes of node $i$, we obtain an upper bound
\begin{align}
\label{eqn_UB_alge_node}
\lambda_2(\widetilde{\mathbf{L}}(i))
&\leq \mathbf{y}^T \widetilde{\mathbf{L}}(i) \mathbf{y} \nonumber \\
&= \mathbf{y}^T (\mathbf{L}+\widetilde{\mathbf{L}}(i)-\mathbf{L}) y \nonumber \\
& =\lambda_2(\mathbf{L})-\sum_{j\in \mathcal{N}_i} (y_i-y_j)^2.
\end{align}
Similar to edge removal, for any connected graph, there exists at least one node removal that leads to a decrease in algebraic connectivity.

If a subset of nodes $\mathcal{R} \subset \mathcal{V}$ are removed from $G$, where $|\mathcal{R}|=q$, then
\begin{align}
\mathbf{L}-\widetilde{\mathbf{L}}(\mathcal{R}) &=\sum_{i \in \mathcal{R}}\sum_{j \in \mathcal{N}_i} (\mathbf{e}_i-\mathbf{e}_j)(\mathbf{e}_i-\mathbf{e}_j)^T \\
&~~~-\frac{1}{2}\sum_{i \in \mathcal{R}}\sum_{j \in \mathcal{R}} \mathbf{A}_{ij}(\mathbf{e}_i-\mathbf{e}_j)(\mathbf{e}_i-\mathbf{e}_j)^T \nonumber,
\end{align}
where the last term accounts for the edges that are attached to the removed nodes at both ends.
Consequently, similar to (\ref{eqn_UB_alge_multiple_link}), we obtain an upper bound for multiple node removals
\begin{align}
\label{eqn_UB_multiple_node}
\lambda_2(\widetilde{\mathbf{L}}(\mathcal{R})) &\leq \lambda_2(\mathbf{L})-\sum_{i \in \mathcal{R}}\sum_{j\in \mathcal{N}_i} (y_i-y_j)^2 \\
&~~~+\frac{1}{2} \sum_{i \in \mathcal{R}}\sum_{j \in \mathcal{R}}\mathbf{A}_{ij} (y_i-y_j)^2. \nonumber
\end{align}
We define the local Fiedler vector node centrality as
\begin{align}
\label{node_LFVC}
\textnormal{node-LFVC}(i)=\sum_{j\in \mathcal{N}_i} (y_i-y_j)^2,
\end{align}
which is the sum of the square terms of the Fiedler vector elementwise differences between node $i$ and its neighboring nodes, and it is also the sum of edge-LFVC of $i's$ neighboring nodes. From (\ref{eqn_UB_alge_node}) and (\ref{eqn_UB_multiple_node}), node-LFVC is associated with the upper bound on the resulting algebraic connectivity for node removal when $|\mathcal{R}|=1$.
A node with higher centrality implies that it plays a more important role in the network connectivity structure.

\subsection{Monotonic submodularity and greedy removals}
Fixing $|\mathcal{R}|=q$, consider the problem of finding the optimal node removal set $\mathcal{R}_{\text{opt}}$ that maximizes the decrease in  the upper bound on algebraic connectivity in (\ref{eqn_UB_multiple_node}). The computational complexity of this batch removal problem is of combinatorial order $\binom{n}{q}$.
Here we show that the greedy LFVC removal procedure, shown in Algorithm \ref{algo_LFVC_detection}, and whose computation is only linear in $n$, has bounded performance loss relative to the combinatorial algorithm in terms of achieving, within a multiplicative constant $(1-1/e)$, an upper bound on algebraic connectivity, where $e$ is Euler's constant.
Let
\begin{align}
\label{eqn_fR}
f(\mathcal{R})=\sum_{i \in \mathcal{R}}\sum_{j\in \mathcal{N}_i} (y_i-y_j)^2-\frac{1}{2} \sum_{i \in \mathcal{R}}\sum_{j \in \mathcal{R}}\mathbf{A}_{ij}
(y_i-y_j)^2
\end{align}
 and recall from (\ref{eqn_UB_multiple_node}) that $\lambda_2(\widetilde{\mathbf{L}}(\mathcal{R})) \leq \lambda_2(\mathbf{L}) - f(\mathcal{R})$. Note that when $|\mathcal{R}|=1$, $f(\mathcal{R})$ reduces to node-LFVC as $\mathbf{A}_{ii}=0$.
The following lemma provides the cornerstone to Theorem \ref{Thm_submodular}.
\begin{lemma}
\label{Lemma_nonnegative}
The function $f(\mathcal{R})$ in (\ref{eqn_fR}) is equal to
\begin{align}
f(\mathcal{R})=\frac{1}{2} \sum_{i \in \mathcal{R}}\sum_{j\in \mathcal{N}_i} (y_i-y_j)^2
+ \frac{1}{2} \sum_{i \in \mathcal{R}}\sum_{j \in \mathcal{V}/\mathcal{R}} \mathbf{A}_{ij} (y_i-y_j)^2. \nonumber
\end{align}
Furthermore, $f(\mathcal{R}) \geq 0$ and $f(\varnothing)=0$, where $\varnothing$ is the empty set.
\end{lemma}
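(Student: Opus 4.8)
The plan is to prove the identity by a direct manipulation of the double sums, using the fact that membership in $\mathcal{N}_i$ is encoded exactly by the adjacency matrix, and then to read off nonnegativity. First I would observe that since $\mathbf{A}_{ij}=1$ precisely when $j\in\mathcal{N}_i$ and $\mathbf{A}_{ij}=0$ otherwise, for each $i$ one has $\sum_{j\in\mathcal{N}_i}(y_i-y_j)^2=\sum_{j\in\mathcal{V}}\mathbf{A}_{ij}(y_i-y_j)^2$. Splitting the inner index set as $\mathcal{V}=\mathcal{R}\cup(\mathcal{V}/\mathcal{R})$ then gives
\begin{align}
\sum_{i\in\mathcal{R}}\sum_{j\in\mathcal{N}_i}(y_i-y_j)^2 = \sum_{i\in\mathcal{R}}\sum_{j\in\mathcal{R}}\mathbf{A}_{ij}(y_i-y_j)^2 + \sum_{i\in\mathcal{R}}\sum_{j\in\mathcal{V}/\mathcal{R}}\mathbf{A}_{ij}(y_i-y_j)^2. \nonumber
\end{align}

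Next I would substitute this into the definition (\ref{eqn_fR}) of $f(\mathcal{R})$. The internal block $\sum_{i\in\mathcal{R}}\sum_{j\in\mathcal{R}}\mathbf{A}_{ij}(y_i-y_j)^2$ then appears with net coefficient $1-\frac{1}{2}=\frac{1}{2}$, while the boundary block $\sum_{i\in\mathcal{R}}\sum_{j\in\mathcal{V}/\mathcal{R}}\mathbf{A}_{ij}(y_i-y_j)^2$ appears with coefficient $1$. Writing the latter coefficient as $\frac{1}{2}+\frac{1}{2}$ and recombining one half of it with the internal block, $\frac{1}{2}\sum_{i\in\mathcal{R}}\sum_{j\in\mathcal{R}}\mathbf{A}_{ij}(y_i-y_j)^2+\frac{1}{2}\sum_{i\in\mathcal{R}}\sum_{j\in\mathcal{V}/\mathcal{R}}\mathbf{A}_{ij}(y_i-y_j)^2=\frac{1}{2}\sum_{i\in\mathcal{R}}\sum_{j\in\mathcal{V}}\mathbf{A}_{ij}(y_i-y_j)^2=\frac{1}{2}\sum_{i\in\mathcal{R}}\sum_{j\in\mathcal{N}_i}(y_i-y_j)^2$, which is exactly the first claimed term; the leftover $\frac{1}{2}\sum_{i\in\mathcal{R}}\sum_{j\in\mathcal{V}/\mathcal{R}}\mathbf{A}_{ij}(y_i-y_j)^2$ is the second. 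This establishes the stated identity.

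Given the identity, the remaining claims are immediate: every summand on its right-hand side is a square $(y_i-y_j)^2\ge 0$ (in the second term further multiplied by $\mathbf{A}_{ij}\ge 0$), so $f(\mathcal{R})\ge 0$, and $f(\varnothing)=0$ since each sum is then over the empty index set. I do not expect a genuine obstacle here; the only point that needs care is the factor-of-$\frac{1}{2}$ bookkeeping, which reflects that the $\mathcal{R}\times\mathcal{R}$ double sum counts each edge with both endpoints in $\mathcal{R}$ twice, whereas the $\mathcal{R}\times(\mathcal{V}/\mathcal{R})$ double sum counts each edge crossing the boundary of $\mathcal{R}$ exactly once.
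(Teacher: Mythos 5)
Your proof is correct and is essentially the same as the paper's: both rest on the identity $\sum_{j\in\mathcal{N}_i}(y_i-y_j)^2=\sum_{j\in\mathcal{V}}\mathbf{A}_{ij}(y_i-y_j)^2$ together with the partition $\mathcal{V}=\mathcal{R}\cup(\mathcal{V}/\mathcal{R})$, differing only in the order in which the $\tfrac{1}{2}$ coefficients are shuffled. The nonnegativity and $f(\varnothing)=0$ conclusions are handled identically.
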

\begin{proof}
The proof can be found in Appendix \ref{appen_nonnegative_proof}.
\end{proof}
The following theorem establishes monotonic submodularity  \cite{krause2012submodular} of $f(\mathcal{R})$. Monotonicity means $f(\mathcal{R})$ is a non-decreasing function: for any subsets $\mathcal{R}_1,\mathcal{R}_2$ of the node set $\mathcal{V}$ satisfying $\mathcal{R}_1 \subset \mathcal{R}_2$ we have $f(\mathcal{R}_1)\leq f(\mathcal{R}_2)$.
Submodularity means $f(\mathcal{R})$ has diminishing gain: for any $\mathcal{R}_1 \subset \mathcal{R}_2 \subset \mathcal{V}$ and $v \in \mathcal{V} \setminus \mathcal{R}_2$ the discrete derivative $\Delta f(v|\mathcal{R})=f(\mathcal {R} \cup \{v\})-f(\mathcal{R})$ satisfies $\Delta f(v|\mathcal{R}_2)\leq \Delta f(v|\mathcal{R}_1)$.
As will be seen below (see (\ref{eqn_submodular_guarantee})), this implies that
greedy node removal based on LFVC is almost as effective as the combinatorially complex batch algorithm that searches over all possible removal sets $\mathcal{R}$.
\begin{thm}
\label{Thm_submodular}
$f(\mathcal{R})$ is a monotonic submodular set function.
\end{thm}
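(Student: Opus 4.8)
The plan is to verify monotonicity and submodularity directly by computing the discrete derivative $\Delta f(v \mid \mathcal{R}) = f(\mathcal{R} \cup \{v\}) - f(\mathcal{R})$ in closed form, starting from the expression for $f(\mathcal{R})$ in (\ref{eqn_fR}) (equivalently, from Lemma \ref{Lemma_nonnegative}). Write $w_{ij} = (y_i - y_j)^2 \geq 0$ and recall that $\mathcal{N}_i = \{ j : \mathbf{A}_{ij} = 1 \}$. For $v \in \mathcal{V} \setminus \mathcal{R}$ I would expand $f(\mathcal{R} \cup \{v\})$, split each of the two double sums in (\ref{eqn_fR}) into the part indexed over $\mathcal{R}$ and the part involving $v$, and use $\mathbf{A}_{vv} = 0$ together with the symmetry $\mathbf{A}_{iv} = \mathbf{A}_{vi}$ to collapse the factor-$\tfrac12$ self-interaction term. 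This should yield
\begin{align}
\Delta f(v \mid \mathcal{R}) = \sum_{j \in \mathcal{N}_v} w_{vj} - \sum_{j \in \mathcal{R} \cap \mathcal{N}_v} w_{vj} = \sum_{j \in \mathcal{N}_v \setminus \mathcal{R}} (y_v - y_j)^2 . \nonumber
\end{align}
Heuristically, $f(\mathcal{R})$ is the total $w$-weight of all edges touching $\mathcal{R}$, and inserting $v$ adds only those incident edges of $v$ whose other endpoint was not already in $\mathcal{R}$.

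Once this identity is established, both properties follow in one line. Monotonicity: every term $(y_v - y_j)^2$ is nonnegative, so $\Delta f(v \mid \mathcal{R}) \geq 0$ for all $v$ and $\mathcal{R}$; building $\mathcal{R}_2$ from $\mathcal{R}_1$ by adding one element at a time then gives $f(\mathcal{R}_1) \leq f(\mathcal{R}_2)$ for $\mathcal{R}_1 \subset \mathcal{R}_2$. Submodularity: if $\mathcal{R}_1 \subset \mathcal{R}_2 \subseteq \mathcal{V}$ and $v \in \mathcal{V} \setminus \mathcal{R}_2$, then $\mathcal{N}_v \setminus \mathcal{R}_2 \subseteq \mathcal{N}_v \setminus \mathcal{R}_1$, so the nonnegative sum defining $\Delta f(v \mid \mathcal{R}_2)$ ranges over a subset of the terms in $\Delta f(v \mid \mathcal{R}_1)$; hence $\Delta f(v \mid \mathcal{R}_2) \leq \Delta f(v \mid \mathcal{R}_1)$, which is exactly the diminishing-returns inequality stated before the theorem.

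The only nontrivial step is the bookkeeping in the closed-form evaluation of $\Delta f(v \mid \mathcal{R})$ — in particular correctly accounting for the $-\tfrac12 \sum_{i \in \mathcal{R}} \sum_{j \in \mathcal{R}} \mathbf{A}_{ij} w_{ij}$ correction when $v$ joins $\mathcal{R}$ — and I do not expect any genuine obstacle there. As an alternative one could simply invoke the standard fact that any nonnegatively weighted edge-coverage function $\mathcal{R} \mapsto \sum_{(i,j)} \mathbf{A}_{ij} w_{ij} \, \mathbf{1}\{ \{i,j\} \cap \mathcal{R} \neq \varnothing \}$ is monotone and submodular, but the self-contained derivative computation is short enough to carry out explicitly.
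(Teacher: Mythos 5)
Your proposal is correct and follows essentially the same route as the paper: the closed-form discrete derivative $\Delta f(v\mid\mathcal{R})=\sum_{j\in\mathcal{N}_v\setminus\mathcal{R}}(y_v-y_j)^2$ you derive is exactly the quantity the paper computes (as $\sum_{j\in\mathcal{N}_v}(y_v-y_j)^2-\sum_{j\in\mathcal{R}}\mathbf{A}_{vj}(y_v-y_j)^2$), and the diminishing-returns comparison via $\mathcal{N}_v\setminus\mathcal{R}_2\subseteq\mathcal{N}_v\setminus\mathcal{R}_1$ is identical. The only cosmetic difference is in monotonicity, where the paper expands $f(\mathcal{R}_2)-f(\mathcal{R}_1)$ directly for general nested sets while you telescope the single-element derivative; both are valid, and your observation that $f$ is a nonnegatively weighted edge-coverage function is an accurate characterization that makes both properties immediate.
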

\begin{proof}
The proof can be found in Appendix \ref{appen_submodular_proof}.
\end{proof}

Based on Theorem \ref{Thm_submodular}, we propose a greedy node-LFVC based node removal algorithm for deep community detection as summarized in Algorithm \ref{algo_LFVC_detection}. 
Algorithm \ref{algo_LFVC_detection} yields an adjacency matrix $\hat{\mathbf{A}}$ that corresponds to the remaining edges after node removal. 
In addition to a list of the $q$ removed nodes, the deep communities are defined by the non-singleton connected components in $\hat{\mathbf{A}}$ supplemented by the nodes that were removed, where the membership of these nodes is defined by the connected components in $\hat{\mathbf{A}}$ to which they connect. More specifically, if $\hat{S}=(\mathcal{V}_{\hat{S}},\mathcal{E}_{\hat{S}})$ denotes one of these non-singleton connected components, 
the set $\mathcal{V}_{\hat{S}} \cup \left\{i \in \mathcal{R}: \mathbf{A}_{ij}=1 \textnormal{~for~some}~j \in \hat{S}\right\}$ is called a deep community.	This definition means that some of the removed nodes may be shared by more than one deep community. 
The following theorem shows that this greedy algorithm has bounded performance loss no worse than $0.63$ as compared with the optimal combinatorial batch removal strategy.

\begin{algorithm}
\caption{Deep Community Detection by greedy node-LFVC}
\label{algo_LFVC_detection}
\begin{algorithmic}
\State \textbf{Input:} Adjacency matrix $\mathbf{A}$, number of removed nodes $q$
\State \textbf{Output:} Deep communities
\State $\mathcal{R}=\varnothing$
\For{$i=1$ to $q$}
    \State Find the largest connected component
    \State Compute the corresponding Fiedler vector $\mathbf{y}$
    \State Find $i^*=\arg \max_{i}  \sum_{j \in \mathcal{N}_i}(y_i-y_j)^2$
    \State $\mathcal{R}=\mathcal{R} \cup i^*$
    \State Remove $i^*$ and its edges from the graph
\EndFor

\State  Find $\hat{S}$,  one of the non-singleton connected components.
\State The set $\mathcal{V}_{\hat{S}} \cup \left\{i \in \mathcal{R}: \mathbf{A}_{ij}=1 \textnormal{~for~some}~j \in \hat{S}\right\}$ is a deep community.
\end{algorithmic}
\end{algorithm}

\begin{thm}
\label{Thm_greedy_guarantee}
Fix the target number of nodes to be removed as $|\mathcal{R}|=q$. Let $\mathcal{R}_{\textnormal{opt}}$ be the optimal node removal set that maximizes $f(\mathcal{R})$ and let $\mathcal{R}_k$ be the greedy node removal set at the $k$-th stage of Algorithm \ref{algo_LFVC_detection}, where $|\mathcal{R}_k|=k$. Then
\begin{align}
f(\mathcal{R}_{\textnormal{opt}})-f(\mathcal{R}_{q})
\leq \left(1-\frac{1}{q}\right)^q f(\mathcal{R}_{\textnormal{opt}}) 
 \leq  \frac{1}{e} f(\mathcal{R}_{\textnormal{opt}}). \nonumber
\end{align}
Furthermore,
\begin{align}
\lambda_2({\widetilde{\mathbf{L}}(\mathcal{R}_{q})})
\leq \lambda_2({\mathbf{L}})-\left(1-e^{-1}\right)f(\mathcal{R}_{\textnormal{opt}}).
\end{align}
\end{thm}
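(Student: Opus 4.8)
The plan is to recognize the claimed bound as the classical greedy guarantee for maximizing a monotone submodular set function under a cardinality constraint, and then to convert the resulting approximation guarantee on $f$ into the stated bound on $\lambda_2$. The only structural inputs needed are Theorem~\ref{Thm_submodular} (monotonicity and submodularity of $f$) together with Lemma~\ref{Lemma_nonnegative} ($f(\varnothing)=0$ and $f(\mathcal{R})\geq 0$).

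First I would check that the stages of Algorithm~\ref{algo_LFVC_detection} coincide with the standard greedy procedure for $f$ subject to $|\mathcal{R}|\leq q$. Expanding the discrete derivative $\Delta f(v|\mathcal{R})=f(\mathcal{R}\cup\{v\})-f(\mathcal{R})$ from the expression in (\ref{eqn_fR}) and using $\mathbf{A}_{vv}=0$, one gets $\Delta f(v|\mathcal{R})=\sum_{j\in\mathcal{N}_v\setminus\mathcal{R}}(y_v-y_j)^2$, which is precisely the node-LFVC of $v$ evaluated on the graph with $\mathcal{R}$ already removed. Hence the node $i^{*}$ selected at stage $k$ is $\arg\max_{v\notin\mathcal{R}_k}\Delta f(v|\mathcal{R}_k)$ and $\mathcal{R}_k$ is the greedy set of size $k$. (The one point of care is that the Fiedler vector is recomputed on the current largest component; the analysis is carried out for the fixed surrogate $f$ built from that vector at each stage, which does not affect the telescoping below.)

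Next I would prove the one-step progress inequality
\[
f(\mathcal{R}_{k+1})-f(\mathcal{R}_k)\;\geq\;\tfrac{1}{q}\bigl(f(\mathcal{R}_{\textnormal{opt}})-f(\mathcal{R}_k)\bigr)
\]
by the usual route: monotonicity gives $f(\mathcal{R}_{\textnormal{opt}})\leq f(\mathcal{R}_{\textnormal{opt}}\cup\mathcal{R}_k)$; writing $\mathcal{R}_{\textnormal{opt}}\setminus\mathcal{R}_k=\{w_1,\dots,w_t\}$ with $t\leq q$ and telescoping, submodularity yields $f(\mathcal{R}_{\textnormal{opt}}\cup\mathcal{R}_k)-f(\mathcal{R}_k)\leq\sum_{\ell=1}^{t}\Delta f(w_\ell|\mathcal{R}_k)\leq q\,\Delta f(i^{*}|\mathcal{R}_k)=q\bigl(f(\mathcal{R}_{k+1})-f(\mathcal{R}_k)\bigr)$, since $i^{*}$ maximizes the marginal gain. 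Setting $\delta_k=f(\mathcal{R}_{\textnormal{opt}})-f(\mathcal{R}_k)$ this reads $\delta_{k+1}\leq(1-1/q)\delta_k$, so by induction $\delta_q\leq(1-1/q)^{q}\delta_0$; since $f(\varnothing)=0$ we have $\delta_0=f(\mathcal{R}_{\textnormal{opt}})$, which is the first displayed bound, and $(1-1/q)^{q}\leq e^{-1}$ for every integer $q\geq 1$ gives the second.

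Finally, the bound on algebraic connectivity follows by instantiating (\ref{eqn_UB_multiple_node}) at $\mathcal{R}=\mathcal{R}_q$: $\lambda_2(\widetilde{\mathbf{L}}(\mathcal{R}_q))\leq\lambda_2(\mathbf{L})-f(\mathcal{R}_q)=\lambda_2(\mathbf{L})-\bigl(f(\mathcal{R}_{\textnormal{opt}})-\delta_q\bigr)\leq\lambda_2(\mathbf{L})-(1-e^{-1})f(\mathcal{R}_{\textnormal{opt}})$. I do not expect a substantive obstacle once Theorem~\ref{Thm_submodular} is in hand: the argument is the standard Nemhauser--Wolsey--Fisher telescoping, and the only step deserving attention is the marginal-gain identity $\Delta f(v|\mathcal{R})=\sum_{j\in\mathcal{N}_v\setminus\mathcal{R}}(y_v-y_j)^2$, which is what licenses identifying the LFVC greedy step of Algorithm~\ref{algo_LFVC_detection} with greedy maximization of $f$.
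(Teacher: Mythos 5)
Your proof is correct and follows essentially the same route as the paper: both instantiate the classical Nemhauser--Wolsey--Fisher greedy guarantee for monotone submodular maximization, derive the contraction $f(\mathcal{R}_{\textnormal{opt}})-f(\mathcal{R}_{k+1})\leq(1-1/q)\bigl(f(\mathcal{R}_{\textnormal{opt}})-f(\mathcal{R}_k)\bigr)$, iterate, and then plug the result into (\ref{eqn_UB_multiple_node}). The only difference is that the paper cites the existence of a high-marginal-gain element $v\in\mathcal{R}_{\textnormal{opt}}\setminus\mathcal{R}_k$ as a known fact, whereas you prove it via the standard telescoping argument and explicitly verify the marginal-gain identity $\Delta f(v|\mathcal{R})=\sum_{j\in\mathcal{N}_v\setminus\mathcal{R}}(y_v-y_j)^2$, which makes your write-up somewhat more self-contained.
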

\begin{proof}
The proof can be found in Appendix \ref{appen_greedy_proof}.
\end{proof}

The submodularity of the function $f$ implies that after $q$ greedy iterations the performance loss is within a factor $1/e$ of optimal batch removal \cite{Nemhauser78}. In other words, when removing $\mathcal{R}_{q}$ from $G$, the algebraic connectivity is guaranteed to decrease by at least $(1-e^{-1})f(\mathcal{R}_{\textnormal{opt}})$ of its original value.
Consequently, identifying the top $q$ nodes affecting algebraic connectivity can be regarded as a monotonic submodular set function maximization problem, and the greedy algorithm can be applied iteratively to remove the node with the highest node-LFVC. Similarly, we can use edge-LFVC to detect deep communities by successively remove the edge with the highest edge-LFVC from the graph, and it is easy to show that the term $\sum_{(i,j)\in \mathcal{E}_\mathcal{R}} (y_i-y_j)^2$ in (\ref{eqn_UB_alge_multiple_link}) is a monotonic submodular set function of the edge removal set $\mathcal{E}_\mathcal{R}$.

\section{Deep Community Detection in Stochastic Block Model}
\label{Sec_block}

To demonstrate the effectiveness of using the proposed LFVC for deep community detection, we compare its detection performance to that of other methods for a synthetic network generated by a stochastic block model (SBM) \cite{Holland83,Snijders97,Karrer11}.
Consider a deep community of size $\nin$ embedded in a network of size $n$, $\nin < n$, and let $\nout=n-\nin$ denote the rest of the graph size. The average number of edges between the members of the deep community is denoted by $\cin$, and the average number of edges between the members that are not in the deep community is denoted by $\cout$.
We assume a restricted stochastic block model characterized by the following $2 \times 2$ group connection probability matrix, which specifies the community interconnectivity probabilities in the SBM:
\begin{align}
\label{eqn_SBN}
\mathbf{P} = \bordermatrix{~ & \text{deep} & \text{others} \cr
                  \text{deep} & \pin & \pout \cr
                  \text{others} & \pout & \pout \cr},
\end{align}
where $\pin=\frac{\cin}{\nin}$ and $\pout=\frac{\cout}{\nout}$ are the edge connection probabilities within and outside the deep community, respectively.
According to the definition of community in Sec. \ref{subsec_community}, when $\cin > \cout$, the nodes in the deep community form a community.
The planted clique problem \cite{Alon98} is a special case of the SBM in (\ref{eqn_SBN}) when $\pin=1$. The detection performance of the modularity method has been analyzed in \cite{Nadakuditi12Plant} for the planted clique problem.

The adjacency matrix generated by the SBM in (\ref{eqn_SBN}) is a random binary matrix with partitioned structure
\begin{align}
\label{eqn_asym_block_model}
\mathbf{A} = \begin{bmatrix}
       \Ain & \C           \\
       \C^T           & \Aout
     \end{bmatrix},
\end{align}
where $\Ain$ and $\Aout$ are the adjacency matrices of a Erdos-Renyi graph with edge connection probabilities $\pin$ and $\pout$, respectively.
$\C$ is an $\nin$-by-$\nout$ binary matrix with its entry being $1$ with probability $\pout$.
$\mathbf{A}$  reduces to the special case of (\ref{eqn_Adjacency_Boolean}) when $g=1$. $\mathbf{A}_1 =\left[ \begin{smallmatrix}
                                                                                \Ain & \mathbf{0} \\
                                                                                \mathbf{0} & \mathbf{0} \\
                                                                              \end{smallmatrix} \right]
$ represents within-community connectivity structure and $\mathbf{A}_{\text{nse}}=\left[ \begin{smallmatrix}
                                                                                \mathbf{0} & \C \\
                                                                                \C^T & \Aout \\
                                                                              \end{smallmatrix} \right]$
                                                                              represents the noisy part outside the deep community.

In the following paragraphs we use random matrix theory and concentration inequalities to show that asymptotically the nodes/edges in the noisy part are more likely to have high LFVC. Therefore the removal strategy based on LFVC will with high probability detect the noisy part to reveal the deep community.

Let $\onein$ be the all-ones vector of length $\nin$ and $\oneout$ be the all-ones vector of length $\nout$, and let $\Din=\text{diag}\left(\C\oneout\right)$ and $\Dout=\text{diag}\left(\C^T\onein\right)$.
Following (\ref{eqn_asym_block_model}), the corresponding graph Laplacian matrix can be represented as
\begin{align}
\label{eqn_Laplacian_block}
\mathbf{L} = \begin{bmatrix}
       \Lin+\Din & -\C           \\
       -\C^T           & \Lout+\Dout
     \end{bmatrix},
\end{align}
where $\Lin$ and $\Lout$ are the graph Laplacian matrices of the two Erdos-Renyi graphs.

\begin{thm}
\label{thm_phase}
Let $n=\nin+\nout$ and let $\by=[\yin^T~\yout^T]^T$ be the Fiedler vector of the graph Laplacian matrix $\mathbf{L}$ in (\ref{eqn_Laplacian_block}). Consider the stochastic block model in (\ref{eqn_SBN}). For a fixed $\pin$, there exists an asymptotic threshold $\pout^*$ such that almost surely,
\begin{align}
\left\{
  \begin{array}{ll}
    \sqrt{\frac{n \nin}{\nout}} \yin \ra \pm \onein~\text{and}~ \sqrt{\frac{n \nout}{\nin}} \yout \ra \mp \oneout, & \hbox{\text{if~} $\pout \leq \pout^*$,} \\
    \onein^T \yin \ra 0~\text{and}~\oneout^T \yout \ra 0, & \hbox{\text{if~} $\pout > \pout^*$,}
  \end{array}
\right. \nonumber
\end{align}
as $\nin \ra \infty,~\nout \ra \infty$ and $\frac{\nin}{\nout} \ra c >0$.
\end{thm}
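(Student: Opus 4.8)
The plan is to analyze the Fiedler vector of $\mathbf{L}$ in (\ref{eqn_Laplacian_block}) by a perturbation argument around the "expected" Laplacian. First I would write $\mathbf{L} = \mathbb{E}[\mathbf{L}] + (\mathbf{L} - \mathbb{E}[\mathbf{L}])$, where $\mathbb{E}[\mathbf{L}]$ is the Laplacian-like matrix obtained by replacing each random block by its mean: the off-diagonal block $-\C$ by $-\pout \onein \oneout^T$, and the diagonal blocks by their expectations. Crucially, $\mathbb{E}[\mathbf{L}]$ has an explicit low-rank-plus-scalar structure. Because every node not in the deep community sees the \emph{same} connection probability $\pout$ to \emph{everyone}, the expected-degree corrections make $\mathbb{E}[\mathbf{L}]$ block-constant on the two groups up to the scalar shifts $n\pout \mathbf{I}$ (roughly), so its spectrum and eigenvectors can be computed in closed form. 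The candidate Fiedler direction of $\mathbb{E}[\mathbf{L}]$ is the piecewise-constant vector $\mathbf{v}^* = [\,\alpha\,\onein^T~~\beta\,\oneout^T\,]^T$ orthogonal to $\mathbf 1$, i.e. $\alpha \nin + \beta \nout = 0$, which after normalization gives exactly the $\sqrt{n\nin/\nout}\,\yin \to \pm\onein$, $\sqrt{n\nout/\nin}\,\yout \to \mp\oneout$ claim. The associated eigenvalue is, up to lower-order terms, proportional to $\pin - \pout$ (times an $\nin$-dependent factor); the competing "bulk" eigenvalue scales like $n\pout$. Equating these two scales and solving for $\pout$ in terms of $\pin$ identifies the threshold $\pout^*$.

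Next I would control the noise $\mathbf{E} := \mathbf{L} - \mathbb{E}[\mathbf{L}]$ in operator norm. Its diagonal part is a diagonal matrix of centered degree fluctuations, bounded by $O(\sqrt{n\pout \log n})$ via Bernstein/Chernoff (our "concentration inequalities"), and its off-diagonal part is a centered Erdős–Rényi-type random matrix whose spectral norm is $O(\sqrt{n\pout})$ with high probability by standard random-matrix bounds. So $\|\mathbf{E}\|_2 = O(\sqrt{n\pout}\,\mathrm{polylog}\,n)$ almost surely in the asymptotic regime $\nin,\nout\to\infty$, $\nin/\nout\to c>0$. Then I would invoke the Davis–Kahan $\sin\Theta$ theorem: as long as the spectral gap separating the "signal" eigenvalue of $\mathbb{E}[\mathbf{L}]$ (the one with eigenvector $\mathbf{v}^*$) from the rest of the spectrum dominates $\|\mathbf{E}\|_2$, the true Fiedler vector $\by$ aligns with $\mathbf{v}^*$ up to $o(1)$ error, giving the first branch. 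This gap is $\Theta(\nin(\pin-\pout))$ versus noise $O(\sqrt{n\pout}\,\mathrm{polylog}\,n)$, and the condition "gap $\gg$ noise" is precisely $\pout \le \pout^*$ (with $\pout^*$ defined so that the two quantities balance). Conversely, when $\pout > \pout^*$ the signal eigenvalue is swallowed by the bulk, the Fiedler eigenvector no longer correlates with the group-indicator direction, and a two-sided argument (lower bound on the bulk spectral norm forces $\by$ into the bulk's eigenspace, which is asymptotically orthogonal to $\mathbf{v}^*$) yields $\onein^T\yin \to 0$ and $\oneout^T\yout \to 0$.

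I would organize the write-up as: (i) compute the spectrum of $\mathbb{E}[\mathbf{L}]$ exactly and extract $\mathbf{v}^*$ and its eigenvalue; (ii) bound $\|\mathbf{E}\|_2$ almost surely; (iii) define $\pout^*$ as the balance point and apply Davis–Kahan for the subthreshold branch; (iv) handle the superthreshold branch by the complementary eigenspace argument; (v) translate the eigenvector alignment into the stated normalized limits, being careful that the Fiedler \emph{value} (second-smallest) versus \emph{vector} indexing is consistent — one must verify that the signal eigenvalue is in fact the \emph{second} smallest below threshold, since $\lambda_1(\mathbf L)=0$ with eigenvector $\mathbf 1$, and everything else (diagonal shifts of order $n\pout$) lies above it.

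The main obstacle I expect is step (iii)–(iv): pinning down $\pout^*$ sharply and, more delicately, proving the \emph{converse} (superthreshold) statement. Davis–Kahan gives the easy direction, but showing \emph{almost sure} loss of correlation above threshold requires a genuine lower bound on how the random off-diagonal block perturbs the relevant eigenvalue — essentially a BBP-type phase-transition analysis (à la Baik–Ben Arous–Péché / Nadakuditi's spiked-model results, which the paper already cites in the planted-clique context). Making this rigorous with only elementary concentration tools, rather than full deformed-Wigner machinery, and controlling the degree-heterogeneity diagonal term simultaneously with the off-diagonal term, is the technically heaviest part; the subthreshold branch and the algebraic closed-form computations are comparatively routine.
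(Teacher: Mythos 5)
There is a genuine gap at the heart of your step (i), and it propagates into step (iii). For the block model (\ref{eqn_SBN}) the population matrix $\mathbb{E}[\mathbf{L}]$ does have a closed-form spectrum, but it is not the one you describe: the piecewise-constant vector $\mathbf{v}^*=[\alpha\onein^T~\beta\oneout^T]^T$ with $\alpha\nin+\beta\nout=0$ has eigenvalue exactly $n\pout$, and this coincides \emph{exactly} with the $(\nout-1)$-fold degenerate eigenvalue of the ``out-block bulk'' (vectors supported on the second group and orthogonal to $\oneout$), because in this model the second group connects to every node with the same probability $\pout$. The gap $\Theta(\nin(\pin-\pout))$ you invoke separates the \emph{in-block} bulk (at $\nin\pin+\nout\pout$) from everything else; it does not isolate $\mathbf{v}^*$. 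Consequently the premise of Davis--Kahan --- a population eigengap around the signal eigenvalue dominating $\|\mathbf{E}\|_2$ --- fails identically for every $\pout>0$. Applied to the degenerate cluster, Davis--Kahan only says the Fiedler vector lies near the $\nout$-dimensional span of $\mathbf{v}^*$ and the out-bulk directions; since every out-bulk direction already satisfies $\onein^T\xin=0$ and $\oneout^T\xout=0$, this is precisely uninformative about which branch of the dichotomy holds. So the sub-threshold branch is not ``comparatively routine'': deciding whether the second eigenvector of $\mathbf{L}$ picks up the $\mathbf{v}^*$ component or a bulk component is itself the fluctuation-level, BBP-type competition you defer to the super-threshold case --- which is also why a nontrivial finite $\pout^*$ exists at all; a macroscopic gap would make the method work for all $\pout<\pin$ and there would be no phase transition. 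Your balance condition ``gap $\gg$ noise'' therefore yields a vacuous threshold rather than the asserted $\pout^*(\pin)$.

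The paper sidesteps the degeneracy by a different route: it writes the Lagrangian stationarity conditions for the Courant--Fischer problem (\ref{eqn_alge}), projects them onto $\onein$ and $\oneout$, and uses almost-sure convergence of the singular values and vectors of $\C/\sqrt{\nin\nout}$ to those of the rank-one mean $\pout\onein\oneout^T$ (via Latala's theorem, Talagrand's concentration inequality, and the Benaych-Georges singular-vector result) to derive an exact asymptotic dichotomy: either $\lambda_2(\mathbf{L})/n\to\pout$, or $\onein^T\yin\to 0$ and $\oneout^T\yout\to 0$. In the first case, positive semidefiniteness of $\Lin$ and $\Lout$ in the resulting quadratic-form identities forces $\yin$ and $\yout$ to be asymptotically constant, and the normalization and orthogonality constraints then give the stated limits. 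If you wish to retain the expected-matrix decomposition, you must supplement it with a second-order analysis inside the degenerate cluster at $n\pout$; as written, step (iii) does not go through.
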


\begin{proof}
The proof can be found in Appendix \ref{appen_phase_proof}.
\end{proof}

Consequently, when $\pout \leq \pout^*$ the
elements of the Fiedler vector tend to have opposite signs within and outside the deep community. Recalling the edge and node LFVC in (\ref{edge_LFVC}) and (\ref{node_LFVC}),
 Theorem \ref{thm_phase} implies that the nodes/edges with high LFVC are more likely to be present on the periphery of  the deep community.
On the other hand, when $\pout > \pout^*$ both $\yin$ and $\yout$ have alternating signs and the deep community cannot be reliably detected by LFVC due to incorrect node/edge removals.
One can interpret $\cin = \nin \cdot \pin$ as the signal strength and $\cout=\nout \cdot \pout$ as the noise level. Based on Theorem \ref{thm_phase}, for a fixed $\cout$ there exists an asymptotic signal strength threshold $\cin^*$ such that the $\yin$ and $\yout$ become constant vectors with opposite signs when $\cin \geq \cin^*$. These results are consistent with the planted clique detection analysis in \cite{Nadakuditi12Plant} that a clique is detectable if the ratio of its within-clique connections to its outside-clique connections is above a certain threshold.

\begin{figure}[t]
	\centering
	\includegraphics[width=3.5in]{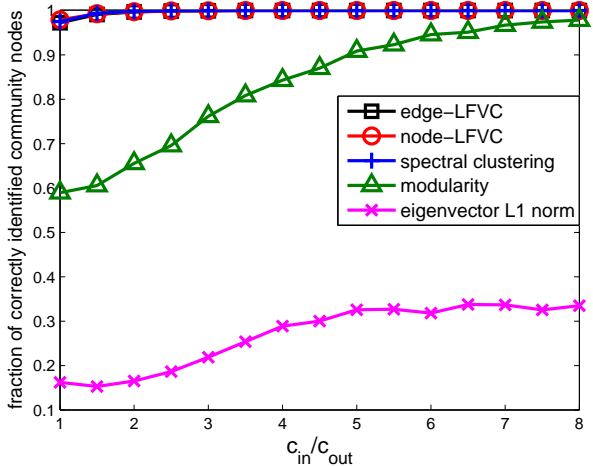}
	\caption{Sensitivity comparisons of community detection algorithms. This figure shows fraction of correctly identified community nodes for simulated stochastic block model (\ref{eqn_asym_block_model}) using (\ref{eqn_SBN}). There is a single deep community, $\nin=40$, $n=200$, and $\cout=2$. The parameter $q$ in the proposed node-LFVC algorithm is selected adaptively: Algorithm \ref{algo_LFVC_detection} stops when the first non-singleton connected component is discovered.  The other algorithms are also implemented with knowledge that there is only one community.    	
		The results are averaged over $100$ trials. The proposed deep community detection algorithm based on LFVC is capable of uncovering the community structure. Spectral clustering has similar performance since it partitions the graph based on the Fiedler vector. The modularity method fails to detect the deep community in the low $\frac{\cin}{\cout}$ regime.}
	\label{Fig_n200m20}
\end{figure}

\begin{figure}[t]
	\centering
	\includegraphics[width=3.5in]{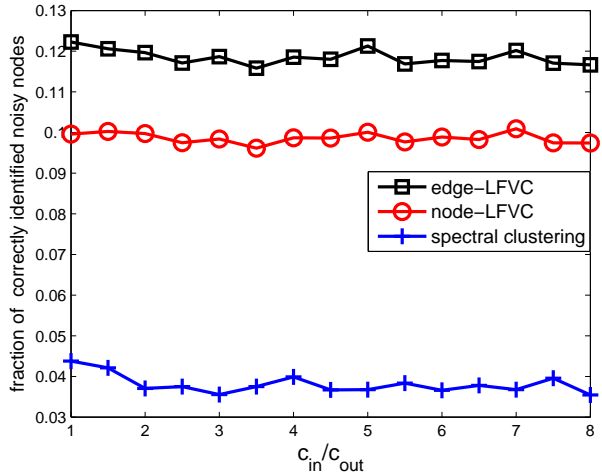}
	\caption{Specificity comparisons of community detection algorithms. For the same algorithms as in Fig. 1, this figure shows the fraction of correctly identified noisy nodes for simulated stochastic block model (\ref{eqn_asym_block_model}) using (\ref{eqn_SBN}). There is a single deep community, $\nin=40$, $n=200$, and $\cout=2$. The results are averaged over $100$ trials. 
		Only the three community detection algorithms having the highest sensitivity (edge-LFVC, node-LFVC, and spectral clustering) are shown here.
		Comparing to Fig. \ref{Fig_n200m20}, when $\frac{\cin}{\cout} \geq 2.5$, the proposed deep community detection algorithm based on LFVC denoises the observed adjacency matrix $\mathbf A$ in (\ref{eqn_Adjacency_Boolean}), removing some noisy edges in $\mathbf{A}_{\text{nse}}$ while retaining the deep community's edges in $\mathbf{A}_{\text{in}}$.
		Spectral clustering is less effective in denoising $\mathbf{A}$. The results validate Theorem \ref{thm_phase} that the Fiedler vector tend to have opposite signs within and outside the deep community when $\cin$ exceeds the threshold $\cin^*=5$.}
	\label{Fig_n200m20_2}
\end{figure}

\begin{figure}[t]
	\centering
	\includegraphics[width=3.5in]{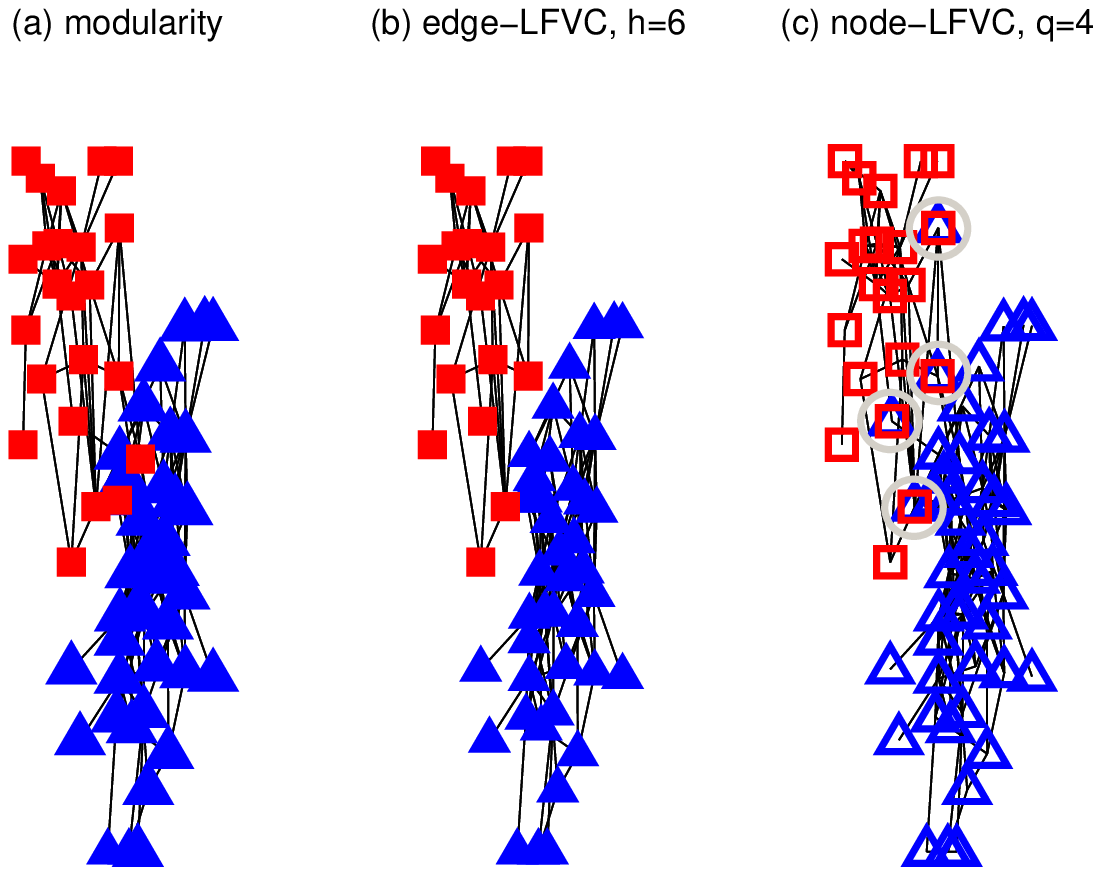}
	\caption{Dolphin social network \cite{Lusseau03Dolphin} with $n=62$ nodes and $m=159$ edges. (a) The modularity method. (b) Edge-LFVC community detection with $h=6$ edge removals. (c) Node-LFVC community detection with $q=4$ node removals. Using node-LFVC, we are able to  identify the four dolphins that interact with two groups as marked by nodes in gray circles. This algorithm, defined by Algorithm \ref{algo_LFVC_detection}, detects that these four nodes are members of the two communities.  The result of spectral clustering is shown in the supplementary file$^1$. Spectral clustering results in the same discovered communities as the proposed  edge-LFVC community detection method. However, unlike the proposed node-LFVC method it does not explicitly identify the four mixed membership dolphins that connect the two communities.}
	\label{Fig_Dolphin}
\end{figure}

The proposed deep community detection method in Algorithm \ref{algo_LFVC_detection} is implemented by sequentially removing the node (edge) with the highest LFVC until the graph becomes disconnected. For comparison, the modularity method is implemented by dividing the network into two communities.
The L1 norm subgraph detection method \cite{MillerICASSP10,Miller10} is also implemented. For the null model (i.e., absence of community structure) of the L1 norm subgraph detection method, we generate $500$ Erdos-Renyi random graphs with edge connection probability $\pout$ and compute the mean and standard deviation of the L1 norm of each eigenvector associated with the modularity matrix.
Let $\text{mean}(i)$ and $\text{std}(i)$ denote the mean and standard deviation of the L1 norm of $i$-th eigenvector in the null model and let $\ell_i$ denote the L1 norm of the $i$-th eigenvector associated with the modularity matrix $\mathbf{B}$. The test statistic is
\begin{align}
t=\min_{i=1,\ldots,n} \frac{\ell_i-\text{mean}(i)}{\text{std}(i)}.
\end{align}
The presence of a dense subgraph is declared if $|t|\geq 2$, which corresponds to $5\%$ false alarm probability \cite{MillerICASSP10,Miller10}. Let $i^*=\arg\min_{i=1,\ldots,n} \frac{\ell_i-\text{mean}(i)}{\text{std}(i)}$. The deep community is identified by selecting the $\nin$ entries of the $i^*$-th eigenvector having the largest magnitude.

Consider detecting a single deep community of size $40$ embedded in a network of size $200$ generated by the stochastic block model (\ref{eqn_SBN}).  
We consider the following definitions of community detection algorithm. Sensitivity and specificity measures: fraction of correctly identified community nodes detected by the algorithm and fraction of correctly identified noisy nodes detected by the algorithm. Let $\hat{S}$ be a community identified by an algorithm and let $S$ denote the true  community embedded in the observed graph. Then the fraction of correctly identified community nodes is defined as 
$\frac{|S \cap \hat{S}|}{\nin}$, where $|S \cap \hat{S}|$ denote the cardinality of the set that contains the nodes belonging to both  $S$ and $\hat{S}$  and $\nin=|S|$. The fraction of correctly identified noisy nodes is defined as $\frac{|S^c \cap \hat{S}^c|}{\nout}$, where $S^c=G / S$ denotes the complement graph of $S$ and $\nout=|S^c|$. Together these sensitivity and specificity measures can be used to assess the performance of deep community detection.

As shown in Fig. \ref{Fig_n200m20}, deep community detection based on LFVC is capable of detecting the embedded community. Spectral clustering is implemented by partitioning the entire graph into two subgraphs $\hat{S}_1$ and $\hat{S}_2$,
	and selecting $\hat{S}= \hat{S}_{i^*}$ as the identified community, where
	$i^*=\arg \max_{i \in \{1,2\}} \{|S \cap \hat{S}_1|,|S \cap \hat{S}_2|\}$. The modularity method is implemented in a similar fashion.

Spectral clustering \cite{Luxburg07} has similar performance to the proposed method since it partitions the graph based on the Fiedler vector.
On the other hand, the modularity method is overly influenced by the noisy part resulting in  degraded community detection. In particular it incorrectly identifies members of the deep community as different groups, especially when $\frac{\cin}{\cout}$ is small. The inaccuracy of the modularity method is due to the fact that in the low  $\frac{\cin}{\cout}$ regime, the network is overly similar to the corresponding degree-equivalent random graph model  used to define the modularity metric. That is, modularity does not provide enough evidence that a community is present.

The fraction of correctly identified noisy nodes via LFVC and spectral clustering is shown in Fig. \ref{Fig_n200m20_2}. 
In this setting, the fraction of identified noisy nodes via node-LFVC is slightly less than that via edge-LFVC. Spectral clustering is less effective in denoising the observed adjacency matrix $\mathbf{A}$.
Observe that the fraction of identified noisy nodes is stable when $\frac{\cin}{\cout} \geq 2.5$. The results validate Theorem \ref{thm_phase} that the elements of the Fiedler vector tend to have opposite signs within and outside the deep community when $\cin$ exceeds the threshold $\cin^*=5$. Consequently, the results show that removal of nodes/edges with high LFVC helps to reveal the true community structure and improve community detection performance.

\section{Deep Community Detection on Real-world Social Network Datasets}
\label{Sec_performance}
In this section, we use the proposed node and edge centrality measures to perform deep community detection on several datasets collected from real-world social networks.
In the implementations of the community detection methods below, the number of removed nodes or edges is a user-specified free parameter. For LFVC (Algorithm \ref{algo_LFVC_detection}) this parameter can be selected based on the bounds established in Theorem \ref{Thm_number_community}.
 We define $h$ the number of edge removals, $q$ the number of node removals and $g$ the number of deep communities.
The results are compared with the modularity method and other node centralities discussed in Sec. \ref{subsec_centrality}.
For data visualization, vertex shapes and colors represent different communities, and edges attached to the removed nodes are retained in the figures in comparison with other methods. Nodes with cross labels (black X labels) are singleton survivors that do not belong to any deep communities using LFVC (Algorithm \ref{algo_LFVC_detection}).

\begin{figure}[t]
	\centering
	\includegraphics[width=3.5in]{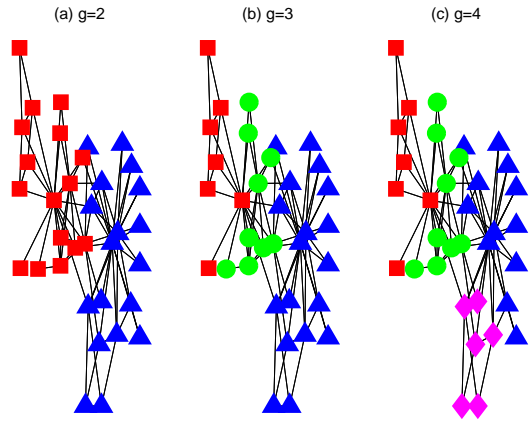}
	\caption{The modularity method on Zachary's karate club \cite{Zachary77Karate} with $n=34$ nodes and $m=78$ edges.}
	\label{Fig_Karate_modularity}
\end{figure}

\begin{figure}[t]
	\centering
	\includegraphics[width=3.5in]{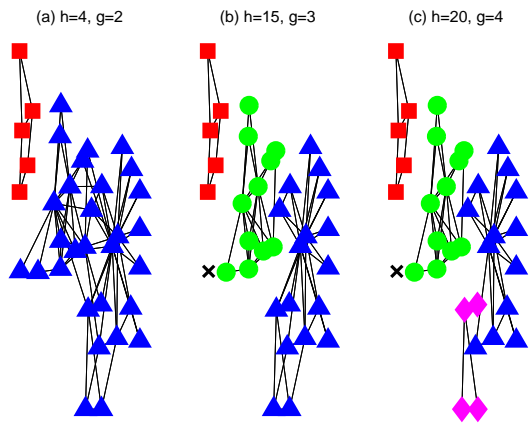}
	\caption{Edge-LFVC community detection on Zachary's karate club \cite{Zachary77Karate} with $n=34$ nodes and $m=78$ edges. For $g=3$ and $4$, the only node with a single acquaintance is excluded from any deep community.}
	\label{Fig_Karate_Edge}
\end{figure}
\begin{figure}[t]
	\centering
	\includegraphics[width=3.5in]{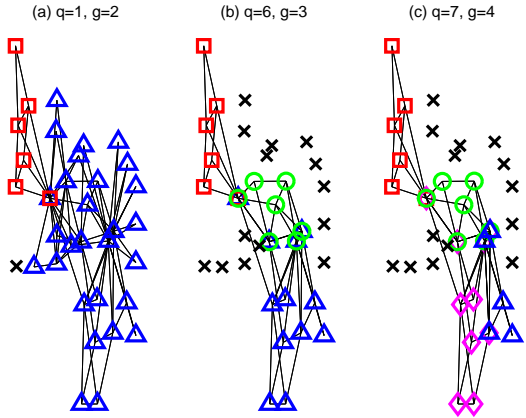}
	\caption{Node-LFVC community detection on Zachary's karate club \cite{Zachary77Karate} with $n=34$ nodes and $m=78$ edges. Important communities and key members are discovered using node-LFVC. This also demonstrates how the singleton survivors (nodes with black X labels) interact through the deep communities. The result of spectral clustering is shown in the supplementary file$^1$. When $g=4$, spectral clustering yields imbalanced communities (one community has single node).}
	\label{Fig_Karate_node}
\end{figure}

\subsection{Dolphin social network}

It is shown in \cite{Lusseau03Dolphin} that there are tight social structures in dolphin populations. Most dolphins interact with other dolphins of the same group and only a few dolphins can interact with dolphins from different groups.
In terms of the proposed LFVC algorithm, these latter Dolphins introduce "noisy" edges connecting the two communities. Figure \ref{Fig_Dolphin} shows that they can therefore be detected by LFVC.
In Fig. \ref{Fig_Dolphin} we compare the results of separating $62$ dolphins into two communities as proposed in \cite{Lusseau03Dolphin}.
For this dataset, community detections based on modularity, edge-LFVC and node-LFVC have high concordance on the community structures. To partition the graph into two communities, we need to remove 6 edges based on edge-LFVC or remove 4 nodes based on node-LFVC. The four dolphins that are able to communicate between these two communities are further identified by node-LFVC.

\subsection{Zachary's karate club}

Zachary's karate club \cite{Zachary77Karate} is a widely used example for social network analysis, which contains interactions among $34$ karate students. Based on the student activities, Zachary determines the ground-truth community structure for $g=2$, which coincides with the result of the modularity method in Fig. \ref{Fig_Karate_modularity} (a). However, the visualization indicates that there are some deep communities embedded in these two communities, such as the five-node community in the upper left corner. Indeed, the modularity will keep increaseing if we further divide communities into $3$ and $4$ small communities as shown in Fig. \ref{Fig_Karate_Edge} (b) and (c), respectively. 

As shown in Fig. \ref{Fig_Karate_Edge} (a), using edge-LFVC, the five-node community in the left upper corner is revealed when we partition the graph into two connected subgraphs. In Fig. \ref{Fig_Karate_Edge} (b), three communities are revealed and the only node with a single acquaintance is excluded from any deep community. Excluding this node makes the community structure more tightly connected compared with Fig. \ref{Fig_Karate_modularity} (b). For $g=4$, the community structure in Fig. \ref{Fig_Karate_Edge} (c) much resembles Fig. \ref{Fig_Karate_modularity} (c) except that we exclude the node having a single acquaintance.

Using node-LFVC, we are able to extract important communities and key members as shown in Fig. \ref{Fig_Karate_node}. For $g=2$, only one node removal is required to partition the graph into two connected subgraphs, which implies that this node is common to the two communities according to the proposed Algorithm \ref{algo_LFVC_detection}. For $g=3$, two deep communities (green circle and blue triangle) are discovered in the largest community (the blue triangle community in Fig. \ref{Fig_Karate_node} (a)), where these two deep communities have dense internal connections compared with the external connections to other members in the largest community.
These discovered deep communities are important communities embedded in the network since they play an important role in connecting the singleton survivors indicated by black X labels. Similar observations hold for $g=4$ in Fig. \ref{Fig_Karate_node} (c).

\subsection{Coauthorship among network scientists}

\begin{figure}[t]
	\centering
	\includegraphics[width=3.5in]{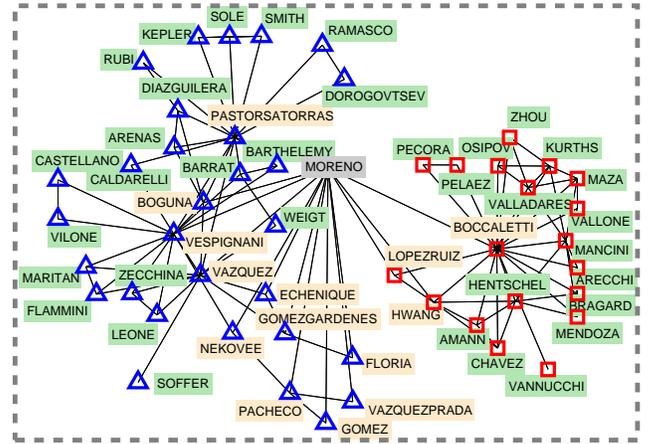}
	\caption{Yamir Moreno's local $2$-hop coauthorship network (from part of the network of coauthorship among network scientists \cite{Newman06community} having $n=379$ nodes and $m=914$ edges). Moreno has $14$ coauthors (marked by light orange color) and his coauthors have $35$ coauthors. The modularity method \cite{Newman06community} detects that Moreno is a member of only one large community (dashed box in gray). The proposed LFVC method detects Moreno as belonging to two separate communities indicated by red and blue nodes, respectively.}
	\label{Fig_NetworkScienceMoreno_twohop_and_modularity}
\end{figure}

\begin{figure}[t]
	\centering
	\includegraphics[width=3.5in]{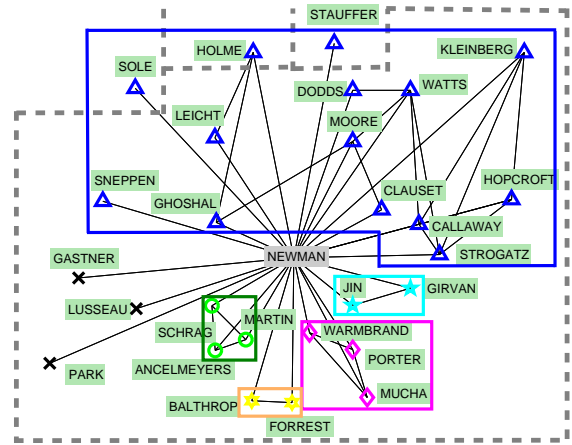}
	\caption{Mark Newman's local $1$-hop coauthor network in the network scientist coauthorship graph \cite{Newman06community}. The proposed LFVC method detects Newman as belonging to $5$ communities (marked by different vertex shapes and colors in solid boxes) and being associated with $3$ singleton survivors (marked by black X label). Notably, Lusseau is detected as singleton survivor since his research area is primarily in zoology.
		As shown in gray dashed box, the modularity method \cite{Newman06community} detects $25$ out of $28$ scholars as being in a single community, and the top left $3$ scholars as belonging to $3$ different communities.}
	\label{Fig_NetworkScienceNewman_and_modularity_2}
\end{figure}

We next examine the coauthorship network studied by Newman \cite{Newman06community}. Nodes represent network scientists and edges represent the existence of coauthorship. Multiple memberships are expected to occur in this dataset since a network scientist may collaborate with other network scientists across different regions all the while having many collaborations with his/her colleagues and students at the same institution. As a result, one would expect, as implemented by Algorithm \ref{algo_LFVC_detection}, 
node-LFVC to be advantageous for identifying authors who with multiple memberships and detecting deep communities.

As shown in Fig. \ref{Fig_NetworkScienceMoreno_twohop_and_modularity}, the first node with the highest node-LFVC is Yamir Moreno, who is a network scientist in Spain but has many collaborators outside Spain. The local (two-hop) coauthorship network of Yamir Moreno is shown in Fig. \ref{Fig_NetworkScienceMoreno_twohop_and_modularity}. The red square community represents the network scientists in Spain and Europe, whereas the blue triangle community represents the rest of the network scientists.

After removing Yamir Moreno from the network, the node with the highest node-LFVC in the remaining largest community is Mark Newman, who is associated with $5$ community memberships and $3$ singleton survivors as shown in Fig. \ref{Fig_NetworkScienceNewman_and_modularity_2}. Each community can be related to certain relationship such as colleagues, students and research institutions.
Notably, Lusseau is detected as a singleton survivor in the deep community detection process in Fig. \ref{Fig_NetworkScienceNewman_and_modularity_2}.
This can be explained by the fact that although Lusseau has coauthorship with Newman, his research area is primarily in zoology and he has no interactions with other network scientists in the dataset since other network scientists are mainly specialists in physics. Also note that the modularity method (gray dashed box) fails to detect these deep communities and it detects 25 out of 28 network scientists in Fig. \ref{Fig_NetworkScienceNewman_and_modularity_2} as one big community.

\subsection{Last.fm online music system}
Last.fm is an online music system which allows users to tag their favorite songs and artists and make friends with other users. We use the friendship dataset collected in \cite{LastFm} for deep community detection based on node-LFVC and the other centralities introduced in Sec. \ref{subsec_centrality}.
Two quantities, the normalized largest community size and the number of discovered communities with respect to node removals, are used to evaluate the performance of community detection when different node centralities are applied.
These two quantities reflect the effectiveness of graph partitioning.
The number of removed nodes is the number of stages for performing deep community detection and removing more nodes reveals more deep communities and key members in the network.

\begin{figure}[t]
    \centering
    \includegraphics[width=3.5in]{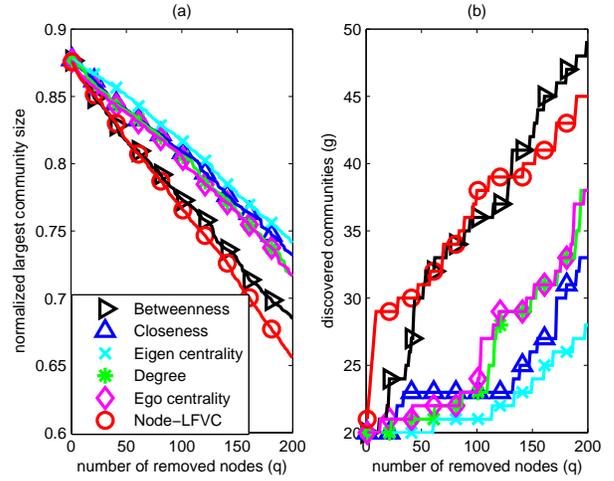}
    \caption{Friendship in Last.fm online music system \cite{LastFm} with $n=1843$ nodes and $m=12668$ edges. (a) Normalized largest community size decreases in the number of node removals at different rates under different node centralities. (b) Discovered communities with respect to node removals using different node centralities. Node-LFVC outperforms other node centralities in terms of minimizing the largest community size, and while being capable of detecting more communities in the network for the first 50 removals.}
    \label{Fig_LastFm}
\end{figure}

\begin{figure}[t]
    \centering
    \includegraphics[width=3.5in]{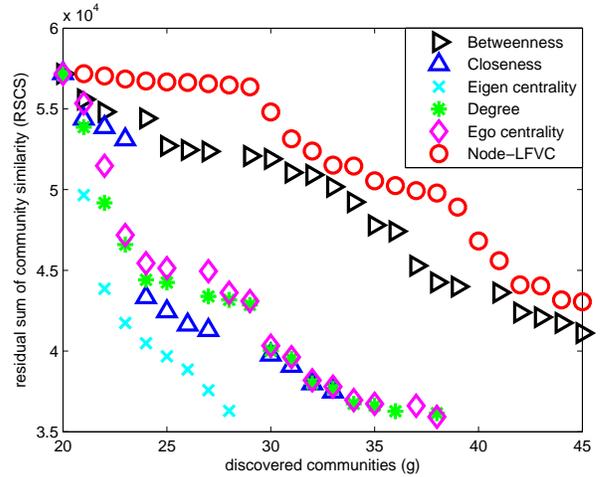}
    \caption{Residual sum of community similarity (RSCS) in Last.fm network. The residual sum of community similarity based on node-LFVC outperforms other centralities, which indicates that node removals based on node-LFVC can best detect deep communities that share common interest in artists.}
    \label{Fig_LastFm_similarity_sum_non_overlap}
\end{figure}

As shown in Fig. \ref{Fig_LastFm} (a), the normalized largest community size decays linearly with respect to the number of node removals. Among all node centralities, node-LFVC has the steepest decaying rate.  Furthermore, using node-LFVC discovers more deep communities, as shown in Fig. \ref{Fig_LastFm} (b) during the first $50$ node removals. The only node centrality that is comparable to node-LFVC is betweenness centrality.

To validate the effectiveness of deep community detection, we use the user-artists dataset in \cite{LastFm} to compute the listening similarity in each discovered community. The dataset contains $17632$ artists and records the number of times each user has listened to an artist. Let $\mathbf{w}_i$ be a $17632$-by-$1$ vector with its $j$-th entry being the number of times the $i$-th user has listened to the $j$-th artist. The residual community similarity (RCS) is defined as the sum of cosine similarity between each user in the same community excluding the nodes that have been removed and the singleton survivors.  The residual community similarity of a deep community $C_k$ is defined as
\begin{align}
\text{RCS}(C_k)=\sum_{i \in {C_k},i \notin \mathcal{R}} \sum_{j \in {C_k},j>i,j \notin \mathcal{R}} \frac{\mathbf{w}_i ^T \mathbf{w}_j}{\|\mathbf{w}_i\|_2 \|\mathbf{w}_i\|_2}.
\end{align}
The residual sum of community similarity (RSCS) is defined as the sum of RCS of each discovered community. That is,
\begin{align}
\text{RSCS}=\sum_{k=1}^g \text{RCS}(C_k).
\end{align}

As shown in Fig. \ref{Fig_LastFm_similarity_sum_non_overlap}, the residual sum of community similarity based on node-LFVC is larger than that for other centralities. This suggests that node removals based on node-LFVC can best detect friendship communities that share common interest in artists. Note that although betweenness may detect more communities in Fig. \ref{Fig_LastFm} (b), Fig. \ref{Fig_LastFm_similarity_sum_non_overlap} shows that the residual sum of community listening similarity based on betweenness is smaller than that based on node-LFVC, which indicate that node-LFVC reveals more accurate community structure than betweenness. The residual sum of community similarity decreases with respect to the number of discovered communities due to the fact that the removed nodes and singleton survivors are excluded  for similarity computation.

\section{conclusion}
\label{sec_con}
Based on bounds on the sensitivity of algebraic connectivity to node or edge removals, we proposed a centrality measure called local Fiedler vector centrality (LFVC) for deep community detection. We proved that LFVC relates to a monotonic submodular set function such that greedy node removals based on LFVC can be applied to identify the most vulnerable nodes or edges with bounded performance loss compared to the optimal combinatorial batch removal strategy. Asymptotic analysis of the Fiedler vector established that LFVC can successfully remove the noisy part while retaining the deep community structure in networks generated by the stochastic block model.
In comparison to the modularity method \cite{Newman06PNAS} and the L1 norm subgraph detection method \cite{Miller10}, we show that LFVC can achieve better community detection performance in correctly identifying the embedded deep communities.

The proposed method provides better resolution for discovering important communities and key members in the real-world social network datasets studied here.
In particular, for the Last.fm online music system dataset, LFVC is shown to significantly outperform other centralities for deep community detection in terms of the residual sum of community listening similarity.
This new measure can likely offer new insights on community structure in other social, biological and technological networks.

\begin{appendices}

\section{Proof of Theorem \ref{Thm_number_community}}
\label{appen_num_community_proof}
From (\ref{eqn_connectivity}) a graph is connected if and only if the algebraic connectivity is greater than zero. Furthermore, the smallest eigenvalue of the associated graph Laplacian matrix is always $0$. Therefore $n-q-r$ is the number of connected components (including the singleton nodes) in $\widetilde{G}$ \cite{Chung97SpectralGraph} by the fact that $n-q$ and $r$ are the node size and rank of $\widetilde{\mathbf{L}}$, respectively. Since the definition of a deep community excludes singleton nodes, the first inequality in (\ref{eqn_UB_of_deep_community}) becomes equality if all connected components in $\widetilde{G}$ are non-singleton.

Using a well-known matrix norm inequality \cite{HornMatrixAnalysis} that
$\|\mathbf{M}\|_\ast \leq r\|\mathbf{M}\|_2$ for any square matrix $\mathbf{M}$ of rank $r$, where $\|\mathbf{M}\|_2=\max_{\|\mathbf{x}\|_2=1} \|\mathbf{M}\mathbf{x}\|_2=\lambda_n(\mathbf{M})$.
We have
\begin{align}
n-q-r \leq n-q-\frac{\|\widetilde{\mathbf{L}}\|_\ast}{\lambda_n(\widetilde{\mathbf{L}})}
=n-q-\frac{2\widetilde{m}}{\lambda_n(\widetilde{\mathbf{L}})}, \nonumber
\end{align}
where $\|\widetilde{\mathbf{L}}\|_\ast = \textnormal{trace}(\widetilde{\mathbf{L}})=2 \widetilde{m}$ is the total degree of $\widetilde{G}$.

Next we show that the second inequality in (\ref{eqn_UB_of_deep_community}) becomes an equality if each non-singleton connected graph is a complete subgraph of the same size.
Consider a graph consisting of $g$ disjoint complete subgraphs of $n^\prime \geq 2$ nodes and $n^\prime(n^\prime-1)/2$ edges. The largest eigenvalue of each subgraph is $n^\prime$ and $\|\widetilde{\mathbf{L}}\|_\ast=g \cdot n^\prime(n^\prime-1)$. The upper bound becomes
$g \cdot n^\prime - \frac{g n^\prime(n^\prime-1)}{n^\prime}=g$, which is exactly the number of non-singleton connected components in $\widetilde{G}$. These results can be directly applied to edge removals in $G$ by setting $q=0$ since no nodes are removed.

\section{Proof of Theorem \ref{Thm_largest_component_size}}
\label{appen_largest_component_proof}
Let $r$ be the rank of $\widetilde{\mathbf{L}}$. We prove that there exists an $n \times (n-r)$ binary matrix $\mathbf{X}=[\mathbf{x}_1~\mathbf{x}_2 \ldots \mathbf{x}_{n-r}]$ whose columns $\{\mathbf{x}_i\}_{i=1}^{n-r}$ satisfy: 1) $\|\mathbf{x}_i\|_1$ is the size of the $i$-th connected component of $\widetilde{G}$; 2) they are orthogonal; 3) they span $\textnormal{null}(\widetilde{\mathbf{L}})$.
Assume $\widetilde{G}$ consists of $g$ connected components. Then there exits a matrix permutation (node relabeling) such that
\begin{align}
\widetilde{\mathbf{L}}=
\left[
  \begin{matrix}
       \widetilde{\mathbf{L}}_1 & 0     & 0      & 0  \\
       0     & \widetilde{\mathbf{L}}_2 & 0      & 0 \\
       0     & 0     & \ddots & 0   \\
       0     & 0     & 0      & \widetilde{\mathbf{L}}_g %\nonumber
  \end{matrix}
\right].
\end{align}
Associated with the $i$-th block matrix $\widetilde{\mathbf{L}}_i$ we define $\mathbf{x}_i$ as an $n \times 1$ binary vector $\mathbf{x}_i$ in \textnormal{null}$(\widetilde{\mathbf{L}})$ having the form
$\mathbf{x}_i=[0 \ldots 0~1 \ldots 1~0 \ldots 0]^T$,
where the locations of the nonzero entries correspond to the indexes of the $i$-th block matrix. It is obvious that $\| \mathbf{x}_i \|_1=\sum_{j=1}^{n}|x_{ij}|$ equals the size of the $i$-th component and such $\{\mathbf{x}_i\}_{i=1}^{n-r}$ are mutually orthogonal. Furthermore, there exists no other binary matrix which is sparser than $\mathbf{X}$ with column span equal to \textnormal{null}$(\widetilde{\mathbf{L}})$. If there existed another binary matrix that were sparser than $\mathbf{X}$, then it would contradict the fact that its column vectors have sums equal to the component sizes of $\widetilde{G}$. Therefore the largest non-singleton connected component size of $\widetilde{G}$ is $\psi(\widetilde{G})=\| \mathbf{X}\|_1=\max_i \|\mathbf{x}_i\|_1$.

\section{Proof of Lemma \ref{Lemma_nonnegative}}
\label{appen_nonnegative_proof}
By the relation
\begin{align}
\label{eqn_alge_quadratic_neighbor}
\sum_{i \in \mathcal{R}}\sum_{j \in \mathcal{V}}\mathbf{A}_{ij} (y_i-y_j)^2=\sum_{i \in \mathcal{R}}\sum_{j \in \mathcal{N}_i} (y_i-y_j)^2
\end{align}
and $\mathcal{V} = \{\mathcal{V}/\mathcal{R}\} \cup \{\mathcal{R}\}$, we have
\begin{align}
f(\mathcal{R})&=\sum_{i \in \mathcal{R}}\sum_{j\in \mathcal{N}_i} (y_i-y_j)^2-\frac{1}{2} \sum_{i \in \mathcal{R}}\sum_{j \in \mathcal{V}}\mathbf{A}_{ij} (y_i-y_j)^2 \nonumber\\
&~~~+ \frac{1}{2} \sum_{i \in \mathcal{R}}\sum_{j \in \mathcal{V}/\mathcal{R}} \mathbf{A}_{ij} (y_i-y_j)^2 \nonumber\\
&=\frac{1}{2} \sum_{i \in \mathcal{R}}\sum_{j\in \mathcal{N}_i} (y_i-y_j)^2
+ \frac{1}{2} \sum_{i \in \mathcal{R}}\sum_{j \in \mathcal{V}/\mathcal{R}} \mathbf{A}_{ij} (y_i-y_j)^2 \nonumber \\
&\geq 0.
\end{align}
$f(\varnothing)=0$ follows directly from the definition of $f(\mathcal{R})$ in (\ref{eqn_fR}).

\section{Proof of Theorem \ref{Thm_submodular}}
\label{appen_submodular_proof}
We first prove the monotonic property.
Consider two node removal sets $\mathcal{R}_1 \subset \mathcal{R}_2 \subset \mathcal{V}$. Then using Lemma \ref{Lemma_nonnegative} and the fact that $\mathcal{R}_1 / \mathcal{R}_2 =\varnothing$,
\begin{align}
\label{eqn_UB_onotone}
&f(\mathcal{R}_2)-f(\mathcal{R}_1) \nonumber \\
&=\sum_{i \in \mathcal{R}_2/\mathcal{R}_1} \sum_{j \in \mathcal{N}_i} (y_i-y_j)^2-\sum_{i \in \mathcal{R}_1} \sum_{j \in \mathcal{R}_2/\mathcal{R}_1} \mathbf{A}_{ij}(y_i-y_j)^2 \nonumber \\
&~~~-\frac{1}{2} \sum_{i \in \mathcal{R}_2/\mathcal{R}_1} \sum_{j \in \mathcal{R}_2/\mathcal{R}_1} \mathbf{A}_{ij}(y_i-y_j)^2 \nonumber \\
& = \sum_{i \in \mathcal{R}_2/\mathcal{R}_1} \sum_{j \in \mathcal{V}} \mathbf{A}_{ij} (y_i-y_j)^2-\sum_{i \in \mathcal{R}_1} \sum_{j \in \mathcal{R}_2/\mathcal{R}_1} \mathbf{A}_{ij}(y_i-y_j)^2 \nonumber \\
&~~~-\sum_{i \in \mathcal{R}_2/\mathcal{R}_1} \sum_{j \in \mathcal{R}_2/\mathcal{R}_1} \mathbf{A}_{ij}(y_i-y_j)^2  \nonumber \\
&~~~+\frac{1}{2}\sum_{i \in \mathcal{R}_2/\mathcal{R}_1} \sum_{j \in \mathcal{R}_2/\mathcal{R}_1} \mathbf{A}_{ij}(y_i-y_j)^2
\nonumber \\
& = \sum_{i \in \mathcal{R}_2/\mathcal{R}_1} \left( \sum_{j \in \mathcal{V}} \mathbf{A}_{ij}(y_i-y_j)^2 - \sum_{j \in \mathcal{R}_2} \mathbf{A}_{ij}(y_i-y_j)^2\right) \nonumber \\
&~~~+\frac{1}{2}\sum_{i \in \mathcal{R}_2/\mathcal{R}_1} \sum_{j \in \mathcal{R}_2/\mathcal{R}_1} \mathbf{A}_{ij}(y_i-y_j)^2 \nonumber \\
& =\sum_{i \in \mathcal{R}_2/\mathcal{R}_1}
\sum_{j \in \mathcal{V}/\mathcal{R}_2} \mathbf{A}_{ij}(y_i-y_j)^2
\nonumber \\
&~~~+\frac{1}{2}\sum_{i \in \mathcal{R}_2/\mathcal{R}_1} \sum_{j \in \mathcal{R}_2/\mathcal{R}_1} \mathbf{A}_{ij}(y_i-y_j)^2 \nonumber \\
& \geq 0.
\end{align}
Therefore $f(\mathcal{R})$ is a monotonic increasing set function (i.e., $f(\mathcal{R}_2) \geq f(\mathcal{R}_1)$ for all $\mathcal{R}_1 \subset \mathcal{R}_2 \subset \mathcal{V}$).

Furthermore, $f(\mathcal{R})$ is a submodular set function \cite{Nemhauser78,Fujishige90} since for any node $v \in \mathcal{V}, v \notin \mathcal{R}_2$, $\mathcal{R}_1 \subset \mathcal{R}_2 \subset \mathcal{V}$, we have from (\ref{eqn_fR}) that
\begin{align}
f(\mathcal{R}_1 \cup v) - f(\mathcal{R}_1) &= \sum_{j\in \mathcal{N}_v} (y_v-y_j)^2- \sum_{j \in \mathcal{R}_1}\mathbf{A}_{vj} (y_v-y_j)^2 \nonumber \\
& \geq \sum_{j\in \mathcal{N}_v} (y_v-y_j)^2- \sum_{j \in \mathcal{R}_2}\mathbf{A}_{vj} (y_v-y_j)^2 \nonumber \\
& = f(\mathcal{R}_2 \cup v) - f(\mathcal{R}_2).
\end{align}
This diminishing returns property of $f(\mathcal{R})$ establishes that $f$ is submodular \cite{krause2012submodular}.

\section{Proof of Theorem \ref{Thm_greedy_guarantee}}
\label{appen_greedy_proof}
By submodularity of $f(\mathcal{R})$ in Theorem \ref{Thm_submodular}, there exists a $v \in \mathcal{R}_{\textnormal{opt}} / \mathcal{R}_k$ \cite{Fujishige90} such that
\begin{align}
f(\mathcal{R}_k \cup v)-f(\mathcal{R}_k) \geq \frac{1}{q} \left( f(\mathcal{R}_{\textnormal{opt}})-f(\mathcal{R}_k)\right).
\end{align}
After algebraic manipulation, we have
\begin{align}
f(\mathcal{R}_{\textnormal{opt}})-f(\mathcal{R}_{k+1}) \leq \left(1-\frac{1}{q}\right)(f(\mathcal{R}_{\textnormal{opt}})-f(\mathcal{R}_k))
\end{align}
and therefore
\begin{align}
f(\mathcal{R}_{\textnormal{opt}})-f(\mathcal{R}_{q}) \leq \left(1-\frac{1}{q}\right)^q f(\mathcal{R}_{\textnormal{opt}}) \leq \frac{1}{e} f(\mathcal{R}_{\textnormal{opt}}).
\end{align}
Applying this result to (\ref{eqn_UB_multiple_node}), we have
\begin{align}
\label{eqn_submodular_guarantee}
\lambda_2({\widetilde{\mathbf{L}}(\mathcal{R}_{q})}) &\leq \lambda_2({\mathbf{L}})-f(\mathcal{R}_{q})  \nonumber \\
& \leq \lambda_2({\mathbf{L}})-\left(1-e^{-1}\right)f(\mathcal{R}_{\textnormal{opt}}).
\end{align}

\section{Proof of Theorem \ref{thm_phase}}
\label{appen_phase_proof}
Let $\mathbf{x}=[\xin^T~\xout^T]^T$. By (\ref{eqn_alge}) we have
$\lambda_2(\mathbf{L})=\min_{\bx} \bx^T \mathbf{L} \bx$ subject to $\xin^T\xin+\xout^T\xout=1$ and $\xin^T\onein+\xout^T\oneout=0$. Using Lagrange multipliers $\mu$, $\nu$ and (\ref{eqn_Laplacian_block}), the Fiedler vector $\by=[\yin^T~\yout^T]^T$ of $\mathbf{L}$ is a minimizer of the function over $\mathbf{x}$:
\begin{align}
\label{eqn_Lagrange}
\Gamma&=\xin^T (\Lin+\Din) \yin + \xout^T (\Lout+\Dout) \xout -2\xin^T \C \xout \nonumber \\
 &~~-\mu(\xin^T\xin+\xout^T\xout-1)-\nu (\xin^T\onein+\xout^T\oneout).
\end{align}
Differentiating (\ref{eqn_Lagrange}) with respect to $\xin$ and $\xout$ respectively, and substituting $\by$ to the equations,
\begin{align}
\label{eqn_Lagrange1}
&2(\Lin+\Din) \yin -2\C \yout - 2\mu \yin -\nu \onein=\mathbf{0}; \\
\label{eqn_Lagrange2}
&2(\Lout+\Dout) \yout -2\C^T \yin - 2\mu \yout -\nu \oneout=\mathbf{0}.
\end{align}
Multiplying $\onein^T$ to (\ref{eqn_Lagrange1}) and $\oneout^T$ to (\ref{eqn_Lagrange2}) from the left hand side, we have
\begin{align}
\label{eqn_Lagrange3}
&2\onein^T\Din \yin -2\onein^T\C \xout - 2\mu \onein^T\xin -\nu \nin=0; \\
\label{eqn_Lagrange4}
&2\oneout^T\Dout \xout -2\oneout^T\C^T \xin - 2\mu \oneout^T\xout -\nu \nout=0.
\end{align}
Since $\onein^T\Din=\oneout^T\C^T$ and $\onein^T\C=\oneout^T\Dout$, summing (\ref{eqn_Lagrange3}) and (\ref{eqn_Lagrange4}) we obtain for the Lagrange multiplier $\nu$:
\begin{align}
\nu=-\frac{2\mu}{n} (\yin^T\onein+\yout^T\oneout)=0
\end{align}
due the fact that $\by \perp \mathbf{1}$.
Applying $\nu=0$ and left multiplying  (\ref{eqn_Lagrange1}) by $\yin^T$ and  (\ref{eqn_Lagrange2}) by  $\yout^T$, we have
\begin{align}
\label{eqn_Lagrange5}
&\yin^T (\Lin+\Din) \yin -\yin^T \C \yout-\mu \yin^T \yin =0;\\
\label{eqn_Lagrange6}
&\yout^T (\Lout+\Dout) \yout -\yout^T \C^T \yin-\mu \yout^T \yout =0.
\end{align}
Since $\by$ is the Fiedler vector, summing (\ref{eqn_Lagrange5}) and (\ref{eqn_Lagrange6}) together we obtain
\begin{align}
\label{eqn_mu}
\mu=\lambda_2(\mathbf{L}).
\end{align}

Let $\bCbar=\pout \onein \oneout^T$, where its entry is the mean of an i.i.d Bernoulli random variable of an entry in $\C$. Let $\sigma_i(\mathbf{M})$ denote the $i$-th largest singular value of $\mathbf{M}$ and write $\C=\bCbar+\C-\bCbar=:\bCbar+\mathbf{\Delta}$. Since $\mathbf{\Delta}_{ij}=1-\pout$ with probability $\pout$ and $\mathbf{\Delta}_{ij}=-\pout$ with probability $1-\pout$,
by Latala's theorem \cite{Latala05}, $\mathbb{E}\sigma_1(\mathbf{\Delta}/\sqrt{\nin \nout})\rightarrow 0$ as $\nin, \nout \ra \infty$.  Therefore by Talagrand's concentration inequality \cite{Talagrand95}, the singular values of $\C/\sqrt{\nin \nout} $ converge to $\pout$, i.e.,
$\sigma_1(\C/\sqrt{\nin \nout} ) \overset{\text{a.s.}}{\longrightarrow} \sigma_1(\bCbar/\sqrt{\nin \nout} )=\pout$ and $\sigma_i(\C/\sqrt{\nin \nout} ) \overset{\text{a.s.}}{\longrightarrow} 0$ for $i \geq 2$ when $\nin \rightarrow \infty$, $\nout \rightarrow \infty$.
We further assume $\nin$ and $\nout$ grow with a constant rate so that $\frac{\nin}{\nout} \rightarrow c$, where $c$ is a positive constant.
Furthermore, as proved in \cite{BenaychGeorges12}, the left/right singular vectors of $\C$ and $\bCbar$ are close to each other in the sense that the squared inner product of their left/right singular vectors converges to $1$ almost surely when $\sqrt{\nin \nout} \pout \rightarrow \infty$.
Consequently, we have $\frac{1}{\nout}\Din \onein=\frac{1}{\nout}\C \oneout \ra \pout \onein$ and $\frac{1}{\nin}\Dout \oneout=\frac{1}{\nin}\C^T \onein \ra \pout \oneout$ almost surely.

Applying these results to (\ref{eqn_Lagrange3}) and (\ref{eqn_Lagrange4}), we have, almost surely,
\begin{align}
\label{eqn_Lagrange7}
& \frac{\pout \onein^T \yin}{\sqrt{c}}-\sqrt{c} \pout \oneout^T \yout -\frac{\mu \onein^T \yin}{\sqrt{\nin\nout}} \ra 0;\\
\label{eqn_Lagrange8}
&\sqrt{c} \pout \oneout^T \yout-\frac{\pout \onein^T \yin}{\sqrt{c}} - \frac{\mu \oneout^T \yout}{\sqrt{\nin\nout}}  \ra 0.
\end{align}
By the fact that $\onein^T \yin+\oneout^T \yout=0$, we have, almost surely,
\begin{align}
\label{eqn_Lagrange9}
&\lb \sqrt{c}+\frac{1}{\sqrt{c}}\rb\left(\pout-\frac{\mu}{n} \right) \onein^T \yin \ra 0;\\
\label{eqn_Lagrange10}
&\lb \sqrt{c}+\frac{1}{\sqrt{c}}\rb\left(\pout-\frac{\mu}{n} \right) \oneout^T \yout \ra 0.
\end{align}
Consequently, recalling $\mu=\lambda_2(L)$ in (\ref{eqn_mu}), at least one of the two cases has to be satisfied:
\begin{align}
\label{eqn_Lagrange11}
& \text{Case 1:}~\frac{\lambda_2(\mathbf{L})}{n}\overset{\text{a.s.}}{\longrightarrow}\pout; \\
\label{eqn_Lagrange12}
& \text{Case 2:}~\onein^T \yin \overset{\text{a.s.}}{\longrightarrow} 0~\text{and}~\oneout^T \yout \overset{\text{a.s.}}{\longrightarrow} 0.
\end{align}

Similar to the results in \cite{Radicchi13}, the algebraic connectivity and the Fiedler vector undergo an asymptotic structural transition between Case 1 and Case 2. That is, a transition from Case 1 to Case 2 occurs when $\pout$ exceeds a certain threshold $\pout^*$.
In Case 1, the asymptotic algebraic connectivity grows linearly with $\pout$. Furthermore, from (\ref{eqn_Lagrange5}), (\ref{eqn_Lagrange6}), (\ref{eqn_Lagrange11}) and $\onein^T \yin+\oneout^T \yout=0$,
in Case 1 the Fiedler vector $\by$ has the following property: almost surely,
\begin{align}
\label{eqn_Lagrange13}
&\frac{\yin^T \Lin \yin}{\sqrt{\nin \nout}} + \frac{\pout}{\sqrt{\nin \nout}}(\onein^T \yin)^2-\sqrt{c}\yin^T\yin \ra 0;\\
\label{eqn_Lagrange14}
&\frac{\yout^T \Lout \yout}{\sqrt{\nin \nout}} + \frac{\pout}{\sqrt{\nin \nout}}(\onein^T \yin)^2-\frac{\yout^T\yout}{\sqrt{c}} \ra 0.
\end{align}
Summing them up, we have
\begin{align}
\label{eqn_Lagrange15}
 &\left[ \frac{2(\onein^T \yin)^2}{\sqrt{\nin \nout}}
- \left(\sqrt{c}\yin^T\yin+\frac{\yout^T\yout}{\sqrt{c}}\right) \right]\pout  \nonumber\\
&~~~+\frac{1}{\sqrt{\nin \nout}}\left(\yin^T \Lin \yin + \yout^T \Lout \yout \right)
\overset{\text{a.s.}}{\longrightarrow}0.
\end{align}
Since in Case 1 the asymptotic Fiedler vector $\by$ is the same for different values of $\pout$ when $\pout \leq \pout^*$, this implies almost surely,
\begin{align}
\label{eqn_Lagrange16}
&\frac{1}{\sqrt{\nin \nout}}\left(\yin^T \Lin \yin + \yout^T \Lout \yout \right) \ra 0; \\
&\frac{2(\onein^T \yin)^2}{\sqrt{\nin \nout}}
- \left(\sqrt{c} \yin^T\yin+\frac{\yout^T\yout}{\sqrt{c}}\right) \ra 0.
\end{align}
By the PSD property of graph Laplacian matrix, $\yin^T \Lin \yin>0$ and $\yout^T \Lout \yout>0$ if and only if $\yin$ and $\yout$ are not constant vectors. Therefore (\ref{eqn_Lagrange16}) implies $\yin$ and $\yout$ are both constant vectors. Finally, by the constraints $\yin^T\yin+\yout^T\yout=1$ and $\onein^T\yin+\oneout^T\yout=0$, we have when $\pout \leq \pout^*$, almost surely,
\begin{align}
\label{eqn_Lagrange17}
\sqrt{\frac{n \nin}{\nout}} \yin \ra \pm \onein~\text{and}~\sqrt{\frac{n \nout}{\nin}} \yout \ra \mp \oneout.
\end{align}
\end{appendices}

\bibliographystyle{IEEEtran}
%\nocite{*}
\bibliography{IEEEabrv,Deep_community}

\clearpage

\section*{Supplementary File}

\begin{figure}[h]
    \centering
    \includegraphics[width=3.8in]{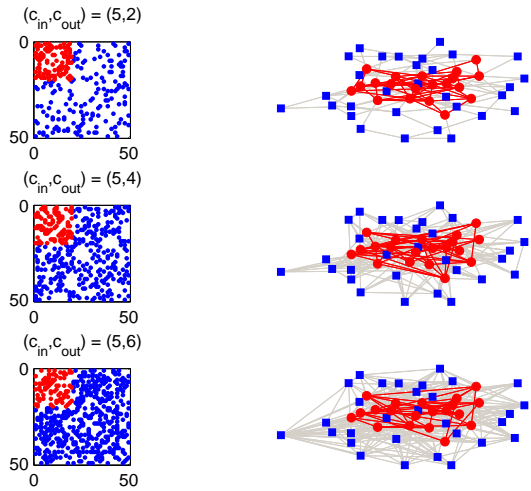}
    \caption{An illustration of deep community detection. The entire network is a realization of the stochastic block model introduced in Sec \ref{Sec_block}, with network size $n=50$ and deep community size $n_{\text{deep}}=20$. The nodes in the deep community are marked by red solid circle, and the other nodes are marked by blue solid rectangles. The left and right columns represent adjacency matrices and their corresponding graphs, respectively. It is observed when $\cin$ is fixed, the deep community is more difficult to be detected as $\cout$ increases.}
    \label{Fig_deep_community_demo}
\end{figure}

\begin{figure}[h]
	\centering
	\includegraphics[width=3.7in]{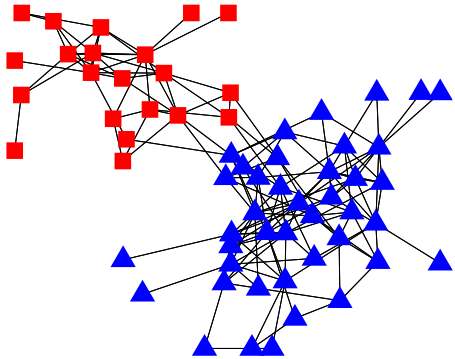}
	\caption{Spectral clustering on dolphin social network. Spectral clustering results in the same discovered communities as the edge-LFVC community detection method.  However, unlike the proposed node-LFVC method it does not explicitly identify the four mixed membership dolphins that connect the two communities.}
	\label{Fig_Dolphin_SC}
\end{figure}

\begin{figure}[h]
	\centering
	\includegraphics[width=3.7in]{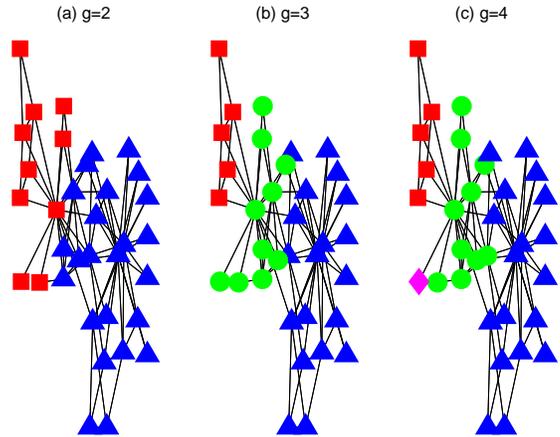}
	\caption{Spectral clustering on Zachary's karate club dataset. The first $g$ smallest eigenvectors of the graph Laplacian matrix are used to cluster the nodes into $g$ communities as suggested in \cite{Luxburg07}. When $g=4$, spectral clustering yields imbalanced communities (one community has single node).}
	\label{Fig_Karate_SC}
\end{figure}

%\begin{figure*}[t]
%	\centering
%	\includegraphics[width=7in]{Fig_deep_community_demo}
%	\caption{An illustration of deep community detection. The entire network is a realization of the stochastic block model introduced in Sec \ref{Sec_block}, with network size $n=50$ and deep community size $m=20$. The nodes in the deep community are marked by red solid circle, and the other nodes are marked by blue solid rectangles. The left and right columns represent adjacency matrices and their corresponding graphs, respectively. It is observed when $\cin$ is fixed, the deep community is more difficult to be detected as $\cout$ increases.}
%	\label{Fig_deep_community_demo}
%\end{figure*}
%
%
%\begin{figure*}[t]
%	\centering
%	\includegraphics[width=5in]{Fig_Dolphin_SC}
%	\caption{Spectral clustering on dolphin social network. Spectral clustering results in the same discovered communities as the edge-LFVC community detection method.  However, unlike the proposed node-LFVC method it does not explicitly identify the four mixed membership dolphins that connect the two communities.}
%	\label{Fig_Dolphin_SC}
%\end{figure*}
%
%\begin{figure*}[t]
%	\centering
%	\includegraphics[width=5in]{Fig_Karate_SC}
%	\caption{Spectral clustering on Zachary's karate club dataset. The first $g$ smallest eigenvectors of the graph Laplacian matrix are used to cluster the nodes into $g$ communities as suggested in \cite{Luxburg07}. When $g=4$, spectral clustering yields imbalanced communities (one community has single node).}
%	\label{Fig_Karate_SC}
%\end{figure*}

\end{document}